\def\@myparfont{\bfseries}
\newcommand\myparagraph{\def\@toclevel{4}%
  \@startsection{paragraph}{4}{\z@}%
  {-.2\baselineskip \@plus -2\p@ \@minus -.2\p@}%
  {-3.5\p@}%
  {\ACM@NRadjust{\@myparfont}}}
\newcommand{\bigO}[1]{\mathchoice{O\left(#1\right)}{O(#1)}{O(#1)}{O(#1)}} % big O for complexity
\newcommand{\softO}[1]{\mathchoice{\tilde{O}\left(#1\right)}{\tilde O(#1)}{O\tilde{~}(#1)}{O\tilde{~}(#1)}} % soft O for complexity
\newcommand{\expmm}{\omega} % exponent for the cost of matrix multiplication
\newcommand{\ZZ}{\mathbb{Z}} % relative integers
\newcommand{\NN}{\mathbb{N}} %  integers
\newcommand{\field}{\mathbb{K}} % base field
\newcommand{\ring}{\mathbb{A}} % a ring
\newcommand{\otherring}{\mathbb{B}} % another ring
\newcommand{\matspace}[2]{\field^{#1 \times #2}} % scalar matrix space
\newcommand{\genBy}[1]{\langle #1 \rangle} % ideal/module generated by #1
\newcommand{\Span}{\operatorname{Span}} % vector space span
\newcommand{\xring}{\field[x]} % univariate polynomial ring in x
\newcommand{\xmatspace}[2]{\xring^{#1 \times #2}} % univariate polynomial matrix space in x
\newcommand{\yring}{\field[y]} % univariate polynomial ring in y
\newcommand{\ymatspace}[2]{\yring^{#1 \times #2}} % univariate polynomial matrix space in y
\newcommand{\xyring}{\field[x,y]} % bivariate polynomial ring in x,y
\newcommand{\xynotring}{\field[x,y_0]} % bivariate polynomial ring in x,y_0
\newcommand{\trsp}[1]{#1^\mathsf{T}} % transpose
\newcommand{\diag}[1]{\operatorname{diag}(#1)}  % diagonal matrix with diagonal entries #1
\newcommand{\ident}[1]{\mathrm{I}_{#1}} % identity matrix of size #1 x #1
\newcommand*{\rem}{%
  \nonscript\mskip-\medmuskip\mkern5mu%
  \mathbin{\operator@font rem}\penalty900\mkern5mu%
  \nonscript\mskip-\medmuskip
}
\newcommand{\module}{\mathcal{M}} % typically K[x]-module of relations
\newcommand{\nodule}{\mathcal{N}} % typically K[y]-module of relations
\newcommand{\ideal}{\mathcal{I}} % typically ideal <f(x), y-a(x)>
\newcommand{\idealGens}{\genBy{f(x),y-a(x)}} % the typical ideal
\newcommand{\ddelta}{d}
\newcommand{\mmu}{m}
\newcommand{\dd}{\delta}
\newcommand{\mm}{\mu}
\newcommand{\witnessM}{\Delta} % big polynomials for Zariski open set
\newcommand{\witnessN}{\Phi} % big polynomials for Zariski open set
\theoremstyle{acmplain}
\newtheorem*{theorem*}{Theorem}
\newtheorem{theorem}{Theorem}[section]
\theoremstyle{acmdefinition}
\newtheorem{definition}[theorem]{Definition}
\crefname{enumi}{Step}{Steps}
\crefname{theorem}{Theorem}{Theorems}
\crefname{lemma}{Lemma}{Lemmas}
\crefname{definition}{Definition}{Definitions}
\crefname{proposition}{Proposition}{Propositions}
\crefname{corollary}{Corollary}{Corollaries}
\crefname{section}{Section}{Sections}
\title{Faster modular composition using two relation matrices}
\author{Vincent Neiger}
\affiliation{
\institution{Sorbonne Université, CNRS, LIP6}
\city{F-75005 Paris}
  \country{France}
}
\author{Bruno Salvy}
\affiliation{
  \institution{Inria, ENS de Lyon, CNRS,}
  \department{UCBL, LIP UMR 5668}
   %\institution{Inria, ENS de Lyon, CNRS, UCBL, LIP UMR 5668}
  %\streetaddress{46, Allée d'Italie}
  \city{Lyon}
  \country{France}
}
\author{\'Eric Schost}
\affiliation{
  \institution{University of Waterloo}
  \department{Cheriton School of Computer Science}
  %\institution{University of Waterloo,  School of Computer Science} 
    \city{Waterloo}
  \country{Canada}
}
\author{Gilles Villard}
\affiliation{
  \institution{CNRS, ENS de Lyon, Inria,}
  \department{UCBL, LIP UMR 5668}
  %\institution{CNRS, ENS de Lyon, Inria, UCBL, LIP UMR 5668}
  %\streetaddress{46, Allée d'Italie}
  \city{Lyon}
  \country{France}
}
\keywords{Modular composition; polynomial matrices; minimal bases.}
\begin{document}

\begin{abstract}
Modular composition is the problem of computing the composition of two
univariate polynomials modulo a third one. For a long time, the fastest
algebraic algorithm for this problem was that of Brent and Kung (1978).
Recently, we improved Brent and Kung's algorithm by computing and using a
polynomial matrix that encodes a certain basis of algebraic relations between
the polynomials. This is further improved here by making use of two polynomial
matrices of smaller dimension. Under genericity assumptions on the input, this
results in an algorithm using~$\tilde{O}(n^{(\omega+3)/4})$ arithmetic
operations in the base field, where $\omega$ is the exponent of matrix
multiplication. With naive matrix multiplication, this is $\tilde{O}(n^{3/2})$,
while with the best currently known exponent \(\omega\) this is $O(n^{1.343})$,
improving upon the previously most efficient algorithms.
\end{abstract}

\maketitle

%%%%%%%%%%%%%%%%%%%%%%%%%%%%%%%%%%%%%%%%%%%%%%%%%%%%%%%%%%%%%%%%%%%%%%%%%%%%%%%%%%%
%
%              INTRODUCTION
%
%%%%%%%%%%%%%%%%%%%%%%%%%%%%%%%%%%%%%%%%%%%%%%%%%%%%%%%%%%%%%%%%%%%%%%%%%%%%%%%%%%%

\section{Introduction}
\label{sec:intro}

Given three polynomials $a,f\in\field[x]$ and $g\in\field[y]$ with
coefficients in a field $\field$, modular
composition asks to compute $g(a)\bmod f$.

Quasi-optimal solutions to this problem are known in special cases and
complexity models. In particular, a famous algorithm of Kedlaya and
Umans solves it in $n^{1+\epsilon}\log(q)^{1+o(1)}$ bit
operations when $\field=\mathbb F_q$ is a finite field~\cite{KU11},
where \(n\) is the degree of \(f\).
Also, very recently, the case of formal power series (where
$f=x^n$) was solved in quasi-linear complexity by Kinoshita and
Li~\cite{KinoshitaLi2024}, in the algebraic model of computation,
where the complexity is expressed in terms of the number of arithmetic operations
in~$\field$. 

In the general case and in the algebraic complexity model, the best
known algorithm has been Brent and Kung's algorithm~\cite{BK78} until
recently~\cite{NSSV24}. The present article brings further improvement
to this general situation.

The main parameter in complexity estimates is $n=\deg(f)$. Indeed, we may focus
on the case $\deg(a)<n$, to which the general case reduces via $g(a) \bmod f =
g(a\bmod f) \bmod f$; and up to using a $y^n$-adic expansion of~$g$, we may
restrict to \(\deg(g) < n\). We use the $\softO{\cdot}$ notation, which hides
factors that are polylogarithmic in \(n\).

Recall that using fast polynomial multiplication, sums, products and
divisions
modulo~$f$ can be computed in~$\softO{n}$ operations 
in~\(\field\)~\cite{GaGe99}.
Thus, the most natural algorithm for
modular composition, based on Horner evaluation, has
complexity~$\softO{n^2}$, which is in agreement with the number
$\Theta(n^2)$ of coefficients it computes.

\myparagraph{Brent and Kung's algorithm.}

Hereafter, we use standard notation such as $\xring_{<m}$ for the set of
univariate polynomials in $x$ with coefficients in $\field$ and degree less
than $m$, and $\xyring_{< (m,d)}$ for the set of bivariate polynomials in $x,y$
of bidegree less than $(m,d)$. 

Brent and Kung~\cite{BK78} introduced the following \emph{baby steps-giant
steps} algorithm, where $m=\lceil\sqrt n\rceil$. 
\begin{enumerate}[({1.}1),leftmargin=0.2cm,itemindent=0.7cm]
  \item \label{BK:baby-steps} Reduce the polynomials
    $a,a^{2},\ldots,a^{m}$ modulo $f$;
  \item Build \(\bar{G}\) in \(\field[y_0,y_1]_{<(m,m)}\) such that
    \(g = \bar{G}(y,y^m)\), and write \(\bar{G} = \bar{g}_0(y_0) +
    \bar{g}_1(y_0) y_1 + \cdots + \bar{g}_{m-1}(y_0) y_1^{m-1}\);
  \item Using the $a^i$'s of Step~\ref{BK:baby-steps}, 
    compute \(\bar{G}(a, y_1) \bmod f\) via the
    multiplication of two matrices in $\field^{m\times m}$ and $\field^{m\times n}$,
    which simultaneously yields all $\bar{g}_i(a)$ modulo $f$;
  \item Deduce \(g(a) = \bar{G}(a, a^m) \bmod f\) using Horner evaluation by evaluating $y_1$ at
    $a^m$ modulo~$f$.
\end{enumerate}

This method uses~$\softO{m^{\expmm+1}} =
\softO{n^{(\expmm+1)/2}}$ operations in $\field$, later
improved by Huang and Pan using fast rectangular matrix
multiplication~\cite{HuPa98}. The total number of coefficients
computed by this method is of order~$n^{3/2}$. This exponent~$3/2$
remained a barrier on the algebraic complexity for a long time.

\myparagraph{Polynomial matrix algorithms} have recently been
introduced in this context, to make use of bivariate polynomials.
A simplified version of the
algorithm in~ \cite{NSSV24}, which applies when $f(0)\neq0$ and the
input \(a\) is generic (in the Zariski sense), can be sketched as
follows. It uses a smaller parameter $m = \lceil n^ {1/3}\rceil$, and
 the complementary parameter $d=\lceil n/m\rceil$ which is of
order \(\Theta(m^2) = \Theta(n^{2/3})\).

In what follows, for a polynomial $g=\sum_{i} g_i x^i$ in $\xring$ and 
nonnegative integers $u,k$, we write $[g]_u^k=
g_u+g_{u+1}x+\cdots+g_{u+k}x^{k}$; $g\rem f$ denotes the
remainder of $g$ in the Euclidean division
by $f$.

\begin{enumerate}[({2.}1),leftmargin=0.2cm,itemindent=0.7cm]
  \item\label{old-algo:baby-steps}
  Reduce 
  $a^2,\ldots,a^m, a^{2m},\ldots,a^{(m-1)m}$
  modulo $f$;
  \item \label{old-algo:step2}
    Compute $[x^ia^{-k}\rem f]_0^{m-1}$ for $0 \le i < m$ and $0 \le k
    < 2d$;
  \item \label{old-algo:basis} Deduce a basis of \(y\)-degree at most \(d\) for the $\field[y]$-module 
    \[\mathcal M_m = \{p(x,y)\in\xyring_{<(m,\cdot)} \mid p(x,a)=0\bmod f\};\]
  \item \label{old-algo:step4} Reduce~$g$ by this basis to obtain
  $G\in\xyring_{<(m,d)}$  such that $G(x,a)=g(a)$ modulo $f$;
  \item \label{old-algo:step4.5}
    Build \(\bar{G}\) in \(\field[x,y_0,y_1]_{<(m,m,m)}\) such that \(G = \bar{G}(x,y,y^m)\),
    and write $\bar{G} = \sum_{0\le i < m} \bar{g}_i(x,y_0) y_1^i$;
  \item \label{old-algo:step5} Using the $a^i$'s of Step~\ref{old-algo:baby-steps}, 
    compute \(\bar{G}(x, a, y_1) \bmod f\), via the multiplication of
    matrices in $\xmatspace{m}{m}_{<m}$ and $\xmatspace{m}{d}_{<m}$, which
    simultaneously yields all $\bar{g}_i(x, a)$ modulo $f$;
  \item \label{old-algo:step6} Deduce \(g(a) = \bar{G}(x, a, a^m) \bmod f\) using
    Horner evaluation with $a^m$ modulo~$f$.
\end{enumerate}
The basis in Step~\ref{old-algo:basis} is given as a matrix
in~$\field[y]^{m\times m}$, called a \emph{relation matrix}. For any
$f$ and for a generic $a$, it has degree $d$ (see \cref{prop:genericbasisMm}),
so that its total size is~$\bigO{n^{4/3}}$ field elements.
This is also an upper bound on the size of the other intermediate
objects computed by the algorithm. In these conditions, the 
whole procedure uses $\softO{n^{(\expmm+2)/3}}$ operations in
$\field$. This can be improved slightly by taking advantage of fast
rectangular matrix products, and this forms the basis for a more complex
algorithm that handles arbitrary input by Las Vegas
randomization~\cite{NSSV24}.

\myparagraph{Modular composition with bivariate input.}
Steps~\ref{old-algo:step4.5} to \ref{old-algo:step6} above are
actually following an algorithm due to Nüsken and Ziegler to compute
\(G(x,a) \bmod f\) for a bivariate polynomial
\(G\)~\cite{NusZie04}. Given $G\in\xyring_{<(\mmu,\ddelta)}$ for
integers $\mmu,\ddelta$ with $\mmu\ddelta \in \Theta(n)$, this
algorithm computes $G(x,a)\bmod f$ in
$\softO{\mmu\ddelta^{\expmm_2/2}}$ field operations, where $\expmm_2$
is such that one can multiply an~$s\times s$ by an~$s\times s^2$
matrix in $O(s^{\omega_2})$ operations. One can take $\omega_2 =
\omega+1$ (this is what we implicitly did above), but slightly better
bounds are available \cite{LG24,VWXXZ24,Almanetal25}. 

This article presents an acceleration of this bivariate
modular composition, under genericity assumptions. This is achieved
thanks to another relation matrix, this time in \(\xmatspace{\mm}{\mm}\),
for a new parameter $\mm$,
which is computed efficiently as a basis of the $\field[x]$-module
\[
  \nodule_\mm = \{p(x,y)\in\xyring_{<(\cdot,\mm)}\mid p(x,a)=0\bmod f\}.
\]
The composition is then performed as follows:
\begin{enumerate}[({3.}1),leftmargin=0.2cm,itemindent=0.7cm]
  \item\label{new-algo:baby-steps} Reduce $a^{j\mm^i}$ modulo $f$, for
    $i=0,1,2$ and $0 \le j < \mm$;
  \item\label{new-algo:step2} Compute a small-degree basis of $\nodule_\mm$, and let \(\dd\) be its degree; 
  \item \label{new-algo:Aij} Simultaneously reduce the polynomials of
  \ref{new-algo:baby-steps} by this basis to obtain
  $A_j,B_j$ in $\xyring_{<(\dd,\mm)}$, for \(0\le j<\mm\), such that
  \[
    A_{j}(x,a)=a^{j\mm}\bmod f \;\;\;\text{and}\;\;\;  B_{j}(x,a)=a^{j\mm^2}\bmod f;
  \]
\item \label{new-algo:barG} Build the polynomial $\bar{G}$ in $\field [x,y_0,y_1,y_2]_{<(\mmu,\mm,\mm,\mm)}$
  such that $G(x,y) = \bar{G}(x,y,y^\mm,y^{\mm^2})$;
\item \label{new-algo:last-step} Use $\bar{G}$ and the $A_j$'s and
  $B_j$'s of Step~\ref{new-algo:Aij} to compute
  \[
    G(x,a) = \bar{G}(x,a,a^\mm,a^{\mm^2})\bmod f
  \]
  by multiplying matrices in $\field[x,y]^{\mm\times\mm}_{<(\mmu,\mm)}$ and $\field[x,y]^{\mm\times \mm}_{<(\lceil \dd/\mm\rceil,\mm)}$.
\end{enumerate}
For any $\mm$, and for a generic $a$, we prove that the degree
$\delta$ of the relation matrix of Step~\ref{new-algo:step2} satisfies
\(\dd = \lceil n/\mm \rceil\). Choosing $\mm$ in
$\Theta(\ddelta^{1/3})$, we obtain the following result.

\begin{theorem}
  \label{theo:bivmodcomp}%
  For any $f$ in $\xring$ of degree $n>0$ and $\mmu,\ddelta$ in $\mathbb{N}_{>0}$ with~$md\in
  \Theta(n)$, there exists a
  non-empty Zariski-open subset $\mathcal{V}_{f,d}$ of
  $\overline{\field}{}^{n}$ such that for $(a_0,\ldots,a_{n-1})$ in $ \mathcal{V}_{f,d} \cap
  \field^n$, $a = \sum_{0 \le i < n} a_i x^i$, and any $G\in\xyring_{<(\mmu,\ddelta)}$, one can compute
  the composition $G(x,a) \bmod f$ in $\softO{\mmu\ddelta^
    {(\expmm+2)/3}}$ operations in $\field$.
\end{theorem}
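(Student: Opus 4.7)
The plan is to choose the algorithm parameter $\mm = \lceil \ddelta^{1/3}\rceil$, giving $\dd = \lceil n/\mm\rceil \in \Theta(\mmu\ddelta^{2/3})$ and $\mm\dd \in \Theta(\mmu\ddelta)$, and then bound the cost of each of the five steps (3.1)--(3.5). The Zariski-open subset $\mathcal{V}_{f,d}$ will be the locus of $a\in\overline{\field}{}^{n}$ for which the minimal $\field[x]$-basis of $\nodule_\mm$ has exactly the expected degree $\dd = \lceil n/\mm\rceil$; that this set is non-empty will be proved by a polynomial non-vanishing argument in the spirit of \cref{prop:genericbasisMm}, which handles the dual module $\mathcal{M}_m$.

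Step~3.1 performs $O(\mm)$ polynomial products modulo $f$ to produce the $3\mm$ powers, at cost $\softO{\mm n} = \softO{\mmu\ddelta^{4/3}}$, absorbed by the final bound. Step~3.2 computes a minimal $\field[x]$-basis of $\nodule_\mm$, an $\mm\times\mm$ polynomial matrix of degree $\dd$: after rewriting $\nodule_\mm$ as the kernel of the $\field[x]$-linear map $(p_0,\ldots,p_{\mm-1})\mapsto\sum_k p_k(x)\,a^k\bmod f$, this basis is computed by a minimal approximant basis algorithm in $\softO{\mm^\expmm \dd} = \softO{\mmu\ddelta^{(\expmm+2)/3}}$ operations. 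Step~3.3 reduces the $3\mm$ univariate residues $a^{j\mm^i}\bmod f$ against this basis to produce the $A_j$ and $B_j\in\xyring_{<(\dd,\mm)}$, within the same cost $\softO{\mm^\expmm \dd}$ via polynomial matrix multiplication. Step~3.4 is only a re-indexing of the coefficients of $G$ and costs $O(\mmu\ddelta)$.

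For Step~3.5, the product of the $\mm\times\mm$ matrix with entries in $\xyring_{<(\mmu,\mm)}$ and the $\mm\times\mm$ matrix with entries in $\xyring_{<(\lceil\dd/\mm\rceil,\mm)}$ is computed by evaluation/interpolation at $O(\mm)$ points of $y$, since both factors have $y$-degree less than $\mm$. This reduces the bivariate product to $O(\mm)$ products of $\mm\times\mm$ matrices over $\field[x]$ with entry-degrees at most $\mmu$ and $\lceil\dd/\mm\rceil$, each of cost $\softO{\mm^\expmm(\mmu + \dd/\mm)}$. Summing over the $O(\mm)$ evaluation points yields $\softO{\mm^{\expmm+1}\mmu + \mm^\expmm \dd} = \softO{\mmu\ddelta^{(\expmm+1)/3} + \mmu\ddelta^{(\expmm+2)/3}} = \softO{\mmu\ddelta^{(\expmm+2)/3}}$; the additional evaluation and interpolation costs, acting on bivariate matrices of total size $\softO{\mmu\ddelta}$, are of lower order. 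Correctness follows from the identity $G(x,y)=\bar{G}(x,y,y^\mm,y^{\mm^2})$ together with the congruences $A_j(x,a)\equiv a^{j\mm}$ and $B_j(x,a)\equiv a^{j\mm^2}$ modulo $f$, so that the matrix product yields precisely $\bar{G}(x,a,a^\mm,a^{\mm^2}) \equiv G(x,a) \pmod{f}$.

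The main obstacle is the analysis of Step~3.2: one must prove that for $a$ outside a proper Zariski-closed subset of $\overline{\field}{}^{n}$ the minimal basis of $\nodule_\mm$ has exactly the degree $\dd=\lceil n/\mm\rceil$ -- which governs the bidegrees of the $A_j, B_j$ and hence all the sizes handled throughout Steps~3.3 and~3.5 -- and that this basis can be computed in $\softO{\mm^\expmm \dd}$ operations. This requires adapting the polynomial-matrix framework developed in \cite{NSSV24} for $\mathcal{M}_m$ to the dual module $\nodule_\mm$, and reformulating the basis computation as a minimal approximant problem whose cost is controlled by known fast algorithms.
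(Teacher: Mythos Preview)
Your plan matches the paper's: same parameter $\mm=\lceil \ddelta^{1/3}\rceil$, same Zariski-open set given by the generic degree of the minimal basis of $\nodule_\mm$ (this is the paper's \cref{lem:generic_bases_Nm}, proved in \cref{appendix:genericity}, rather than an adaptation of \cref{prop:genericbasisMm}, but the idea is the same), and the same dominant cost $\softO{\mm^\expmm\dd}=\softO{\mmu\ddelta^{(\expmm+2)/3}}$ coming from the core matrix product. Your complexity accounting for Steps~3.1--3.4 and for the matrix product in Step~3.5 is correct.

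There is, however, a gap in your treatment of Step~3.5. The single matrix product you analyze does \emph{not} yield $\bar{G}(x,a,a^\mm,a^{\mm^2})\bmod f$: it only performs the substitution $y_1^{i_1}\mapsto A_{i_1}(x,y_0)$, producing bivariate polynomials $s_{i_2}(x,y_0)=\sum_{i_1}s_{i_1i_2}(x,y_0)\,A_{i_1}(x,y_0)$, one for each $i_2$. At this point the output still lives in $\field[x,y_0]$, not in $\xring_{<n}$, and the $B_j$'s have not been used at all. Two further steps are required: first form $S(x,y_0)=\sum_{i_2}s_{i_2}(x,y_0)\,B_{i_2}(x,y_0)$ (a plain vector operation costing $\softO{\mm^2(\dd+\mmu)}$, negligible), and then perform a Horner evaluation of $S$ at $y_0=a$ modulo~$f$ (cost $\softO{\mm(n+\dd+\mmu)}$, also negligible). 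Only after this last step does one obtain an element of $\xring_{<n}$ equal to $G(x,a)\bmod f$. Your overall complexity bound survives, but the correctness argument as written is incomplete. A minor additional point: your evaluation/interpolation scheme for the bivariate matrix product tacitly assumes $\field$ has at least about $2\mm$ elements; the paper sidesteps this by computing the product directly over $\field[x,y_0]$.
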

\noindent Observe that, as soon as \(\expmm < 2.5\), one has \(
(\expmm+2)/3 < 3/2 \le \expmm_2/2\).
% since \(\expmm_2 \ge 3\).

In the univariate case with $(\mmu,\ddelta)= (1,n)$, this provides us
with an alternative way to recover the operation count 
$\softO{n^{(\expmm+2)/3}}$ of the algorithm in
Steps~\ref{old-algo:baby-steps}-\ref{old-algo:step6}, at least for
generic input. In both cases, the dimensions of the relation matrices
are of order $n^{1/3}$.

This type of bivariate composition
is also at the core of Nüsken and Ziegler's bivariate multipoint
evaluation algorithm~\cite[Algo.~11]{NusZie04}. For this question, our
improved algorithm leads to the following.

\begin{theorem}\label{theo:multipoint}
Let $\mmu,\ddelta,n$ be in $\mathbb{N}_{>0}$ with~$md\in \Theta(n)$.
For any $\xi=(x_1,\dots,x_{n})$ in $\field^n$ with pairwise distinct~$x_i$,
there exists a
  non-empty Zariski-open subset $\mathcal{W}_{\xi,d}$ of
  $\overline{\field}{}^{n}$ such that for 
$(y_1,\dots,y_{n})$ in  $\mathcal{W}_{\xi,d} \cap
  \field^n$,  any
$G(x,y)\in\xyring_{<(\mmu,\ddelta)}$ can be evaluated at all
$(x_i,y_i)\in\field^2$, for $1\le i\le n$,
in~$\softO{\mmu\ddelta^{(\expmm+2)/3}}$ operations in~$\field$.
\end{theorem}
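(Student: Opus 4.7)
The plan is the classical reduction from bivariate multipoint evaluation to univariate modular composition already used in \cite[Algo.~11]{NusZie04}: encode the $n$ target values $y_i$ as a single univariate polynomial $a$ interpolating them at the $x_i$, so that each sought value $G(x_i,y_i)$ becomes the value at $x_i$ of the univariate polynomial $G(x,a(x))\bmod f$, where $f(x)=\prod_{1\le i\le n}(x-x_i)$.

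Concretely, I would first build $f\in\xring$ of degree $n$ by subproduct-tree construction and recover the unique $a\in\xring_{<n}$ with $a(x_i)=y_i$ by fast interpolation, both in $\softO{n}$ operations. Since $f(x_i)=0$, one has
\[
  G(x_i,y_i)=G(x_i,a(x_i))=\bigl(G(x,a(x))\bmod f\bigr)(x_i),
\]
so it suffices to compute $h(x):=G(x,a(x))\bmod f\in\xring_{<n}$ and evaluate $h$ at $x_1,\dots,x_n$ by fast multipoint evaluation in $\softO{n}$ operations. The computation of $h$ is precisely an instance of bivariate modular composition, and \cref{theo:bivmodcomp} handles it in $\softO{\mmu\ddelta^{(\expmm+2)/3}}$ operations provided the coefficient vector of $a$ lies in the Zariski-open set $\mathcal{V}_{f,d}\cap\field^n$. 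Since $n\in\Theta(\mmu\ddelta)$ and $\expmm\ge 2$, the $\softO{n}$ interpolation and multipoint-evaluation overheads are absorbed into this bound.

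The one step that needs any care, and the only real obstacle, is to translate the genericity condition, which \cref{theo:bivmodcomp} states in terms of the coefficient vector of $a$, into a genericity condition on $y=(y_1,\dots,y_n)$. For this, observe that the evaluation map
\[
  \phi_\xi\colon\overline{\field}{}^n\to\overline{\field}{}^n,\qquad (a_0,\dots,a_{n-1})\mapsto\Bigl(\sum_{0\le j<n}a_j x_i^j\Bigr)_{1\le i\le n},
\]
is linear with matrix the $n\times n$ Vandermonde matrix in the pairwise distinct $x_i$, hence a linear isomorphism over $\field$. Setting $\mathcal{W}_{\xi,d}:=\phi_\xi(\mathcal{V}_{f,d})$ then yields a non-empty Zariski-open subset of $\overline{\field}{}^n$, since linear isomorphisms preserve Zariski-openness and non-emptiness. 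For $(y_1,\dots,y_n)\in\mathcal{W}_{\xi,d}\cap\field^n$ the interpolant $a$ has coefficient vector in $\mathcal{V}_{f,d}\cap\field^n$, so \cref{theo:bivmodcomp} applies and the claimed complexity follows. Essentially all the work is concentrated in \cref{theo:bivmodcomp}; the rest of the argument is this Vandermonde genericity transport together with standard fast interpolation and multipoint evaluation.
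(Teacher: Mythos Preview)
Your proposal is correct and follows essentially the same approach as the paper's proof: reduce to \cref{theo:bivmodcomp} via interpolation of $a$ and the product $f=\prod(x-x_i)$, then recover the values by univariate multipoint evaluation, transporting the genericity condition through the invertible Vandermonde map. If anything, your treatment of the Vandermonde transport of $\mathcal{V}_{f,d}$ to $\mathcal{W}_{\xi,d}$ is slightly more explicit than the paper's.
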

\cref{theo:bivmodcomp} and \cref{theo:multipoint} are proved in
\cref{sec:bivmodcomp}. For comparison, Nüsken and Ziegler's method
has complexity $\softO{\mmu\ddelta^{\expmm_2/2}}$, with no genericity
assumption~\cite[Thm.~8]{NusZie04}.

\myparagraph{New univariate composition algorithm.}  For our 
question of computing $g(a) \bmod f$, we combine the ideas from both
preceding algorithms, using relation matrices for both modules: after
choosing our parameter $m$, we reduce $g$ by the basis of $\module_m$
to obtain a bivariate polynomial $G$ in $\xyring_{<(m,d)}$, with $d
=\lceil n/m\rceil$, then we use the algorithm of the previous
paragraph to evaluate $G(x,a) \bmod f$.

This time, the optimal choice for $m$ is $m=\lceil n^{1/4} \rceil$;
this gives $d\in\Theta(n^{3/4})$ and allows us to set the parameter
$\mm$ in the previous paragraph to $\mm=m$ (which is indeed in
$\Theta(d^{1/3})$). The resulting algorithm looks as follows:

\begin{enumerate}[({4.}1),leftmargin=0.2cm,itemindent=0.7cm]
  \item\label{newer-algo-init}  Perform Steps~\ref{new-algo:baby-steps}-\ref{new-algo:Aij}, with $\mu=m$;
  \item\label{new-algo:truncpowers} Use the polynomials $A_j$ and $B_j$ to compute the truncations
   $[x^ia^{-k-1}\rem f]_0^{m-1}$ for $0 \le i < m$ and $0 \le k < 2d$;
  \item From these, compute a basis of~$\module_m$ of degree at most \(d\);
  \item\label{new-algo:step4} Reduce~$g$ by this basis to obtain
  $G\in\xyring_{<(m,d)}$  such that $G(x,a)=g(a)$ modulo $f$;
  \item\label{new-algo:last-composition} Conclude with Steps~\ref{new-algo:barG}-\ref{new-algo:last-step}
\end{enumerate}

Step~\ref{new-algo:truncpowers} is the most technical one.  The
corresponding algorithm (in \cref{sec:trunc_powers}) improves on the
one used in~\cite{NSSV24} by making use of the polynomials $A_j$ and
$B_j$ introduced in the previous paragraph, and manipulates objects of
size $\bigO{mn}$ using $\softO{m^{\expmm-1}n}$ arithmetic operations.
This is the same
complexity bound as the final composition in
Step~\ref{new-algo:last-composition}, in the same way as
Steps~\ref{old-algo:step2} and~\ref{old-algo:step5} had the same
complexity bound in our previous algorithm. These apparent coincidences are
explained by complexity equivalences, discussed in
\cref{sec:transpositions}. Altogether, we obtain the following result
proved in \cref{sec:composition}.
\begin{theorem}
  \label{thm:modular_composition}%
  For any $f$ in $\field[x]$ of degree
  $n>0$, there exists a non-empty
  Zariski-open subset $\mathcal O_{f}$ of $\overline{\field}{}^{n}$ such
  that, for $(a_0,\ldots,a_{n-1})$ in $\mathcal O_{f} \cap \field^
  {n}$, $a = \sum_{0 \le i < n} a_i x^i$,
  and any $g \in \field[x]_{<n}$, one can compute the modular composition $g(a) \bmod f$
  using $\softO{n^{(\expmm+3)/4}} \subset \bigO{n^{1.343}}$ operations in
  $\field$.
\end{theorem}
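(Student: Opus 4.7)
The plan is to execute the algorithm of Steps~\ref{newer-algo-init}--\ref{new-algo:last-composition} with parameters $m = \lceil n^{1/4} \rceil$ and $d = \lceil n/m \rceil \in \Theta(n^{3/4})$, so that the inner parameter $\mm = m$ lies in $\Theta(d^{1/3})$, matching the optimal choice in \cref{theo:bivmodcomp}. I would define $\mathcal{O}_f$ as the intersection of three Zariski-open subsets: the set $\mathcal{V}_{f,d}$ from \cref{theo:bivmodcomp} applied with $(\mmu,\ddelta)=(m,d)$, which covers both \cref{newer-algo-init} and \cref{new-algo:last-composition}; the genericity set from \cref{prop:genericbasisMm} ensuring that $\module_m$ admits a relation basis of $y$-degree at most $d$; and a further condition on $a$ required by the truncated-inverse-powers routine of \cref{sec:trunc_powers}. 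Each factor is non-empty and Zariski-open, hence so is their intersection. Correctness then follows along the algorithm: the basis of $\module_m$ computed in the first three items has dimension $m$ and $y$-degree at most $d$, so reducing $g$ by it in \cref{new-algo:step4} yields $G \in \xyring_{<(m,d)}$ with $G(x,a) \equiv g(a) \pmod{f}$, and applying \cref{theo:bivmodcomp} to $G$ through \cref{new-algo:last-composition} returns $g(a) \bmod f$.

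The complexity analysis splits into three contributions, each equal to $\softO{n^{(\omega+3)/4}}$ after substituting the parameters. First, the combined work of \cref{newer-algo-init} and \cref{new-algo:last-composition} is bounded by $\softO{\mmu \ddelta^{(\omega+2)/3}}$ from \cref{theo:bivmodcomp} with $(\mmu,\ddelta) = (m,d)$; this evaluates to $\softO{n^{1/4} \cdot n^{(\omega+2)/4}} = \softO{n^{(\omega+3)/4}}$. Second, computing the basis of $\module_m$ from the truncations, together with the reduction of $g$ by that basis, amounts to standard relation-basis machinery on $\field[y]$-modules of dimension $m$ and degree at most $d$, and fits within the same bound. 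Third, \cref{new-algo:truncpowers} is handled in \cref{sec:trunc_powers} in $\softO{m^{\omega-1} n}$ operations on data of size $\bigO{mn}$; with $m = n^{1/4}$ this is $\softO{n^{(\omega-1)/4 + 1}} = \softO{n^{(\omega+3)/4}}$ as well.

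The main obstacle is \cref{new-algo:truncpowers}: exploiting the two small relation matrices $A_j, B_j$ to compute all $2md$ truncations $[x^i a^{-k-1} \rem f]_0^{m-1}$ in $\softO{m^{\omega-1} n}$ is the decisive new ingredient and is what lets the exponent drop from $(\omega+2)/3$ to $(\omega+3)/4$. Once this bound is granted, the balancing $m = \lceil n^{1/4}\rceil$ makes every step meet at the target $\softO{n^{(\omega+3)/4}}$, and the numerical estimate $n^{1.343}$ follows from the current record $\omega < 2.372$.
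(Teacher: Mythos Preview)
Your proposal follows the paper's approach and is essentially correct for the case $f(0)\neq 0$: you pick $m=\lceil n^{1/4}\rceil$, define $\mathcal{O}_f$ as an intersection of the relevant genericity loci, and balance the three cost contributions at $\softO{n^{(\omega+3)/4}}$. The paper's Section~5 does precisely this, with the same parameter choices and the same cost analysis.

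However, there is a genuine gap. The theorem is stated for \emph{any} $f$ of degree $n$, but the machinery you invoke for the basis of $\module_m$ (\cref{prop:genericbasisMm}) explicitly requires $f(0)\neq 0$, and the truncated inverse powers require $\gcd(a,f)=1$. When $f(0)=0$, your definition of $\mathcal{O}_f$ breaks down: \cref{prop:genericbasisMm} simply does not apply, so you cannot guarantee a non-empty Zariski-open set on which the algorithm succeeds. The paper closes this gap by a separate reduction: write $f = x^\alpha f^*$ with $f^*(0)\neq 0$, compute $g(a)\bmod x^\alpha$ via the Kinoshita--Li quasi-linear power series composition, compute $g(a)\bmod f^*$ by the main algorithm (now legitimate since $f^*(0)\neq 0$), and recombine by CRT. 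One must also check that the pullback of $\mathcal{O}_{f^*}$ under $a\mapsto a\bmod f^*$ is still non-empty Zariski-open in the coefficients of $a$; the paper does this by a specialization argument. Without this step, your proof does not cover the full statement.
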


\noindent This is to be compared with the previous
complexity~$\bigO{n^{1.43}}$~\cite{NSSV24}. This algorithm never returns wrong
output: along computations, it is able to detect whether genericity assumptions
are met, and if that is not the case one can fall back on other algorithms. It
is natural to wonder whether randomization could help and yield a Las Vegas
algorithm, as it did in~\cite{NSSV24}; we leave this to future work.

%%%%%%%%%%%%%%%%%%%%%%%%%%%%%%%%%%%%%%%%%%%%%%%%%%%%%%%%%%%%%%%%%%%%%%%%%%%%%%%%%%%
%
%              SEC: MODULES AND DIVISION
%
%%%%%%%%%%%%%%%%%%%%%%%%%%%%%%%%%%%%%%%%%%%%%%%%%%%%%%%%%%%%%%%%%%%%%%%%%%%%%%%%%%%

\section{Modules of relations and their bases}
\label{sec:modules}

Let \(a, f \in \xring\) be polynomials with \(f\neq0\) and let
$n=\deg f$. The elements of the bivariate ideal \(\ideal = \idealGens
\subseteq \xyring\) are called \emph{relations} in what follows, and
play an important role in our algorithms.  In particular, we make use
of relations of small degrees in $x$ or $y$, by focusing on the
$\field[y]$- and $\field[x]$-modules
\begin{align}\label{eq:Mm}
  \module_\mm
  % = \ideal \cap \xyring_{<(\mm,\cdot)}
  &= \ideal_{<(\mm,\cdot)}
  = \{p \in \xyring_{<(\mm,\cdot)} \mid p(x,a) = 0 \bmod f\},\\
  \nodule_\mm\label{eq:Nm}
  % = \ideal \cap \xyring_{<(\cdot,\mm)}
  &= \ideal_{<(\cdot,\mm)}
  = \{p \in \xyring_{<(\cdot,\mm)} \mid p(x,a) = 0 \bmod f\}.
\end{align}
Both are free and of rank~$\mm$. Their bases can be represented by nonsingular
matrices in $\field[y]^{\mm\times \mm}$ and $\field[x]^{\mm\times \mm}$, respectively,
each column containing the coefficients of a relation. 

In this section, we show that for generic~\(a\), both \(\nodule_\mm\) and~\(\module_\mm\)
have bases of degree $\lceil n/\mm\rceil$
(\cref{lem:generic_bases_Nm,prop:genericbasisMm}), and we present
algorithms computing such bases efficiently.  We also show how a
reduction by these bases allows one to compute the polynomials
$A_j(x,y)$ and $B_j(x,y)$ for Step~\ref{new-algo:Aij}, as well as the
polynomial $G$ for Step~\ref{new-algo:step4} in \cref{sec:intro}
(\cref{cor:AjBj,lem:divrem_y}).

\subsection{Minimal bases}
\label{sec:modules:relbas}

When dealing with free modules over a polynomial ring in one variable,
one usually computes with bases (which we see as univariate
polynomial matrices) having specific forms, to benefit from their
smaller size and stronger properties. We first recall basic
definitions and properties (see also \cite{Kailath80,BeLaVi99}).

\begin{definition}
  Let $P\in\field[x]^{\mm\times \mm}$ be a polynomial matrix with no zero
  column, and let $d_j$ be the maximal degree of the elements in its
  $j$th column, for $1 \le j \le \mm$. The matrix \(P\) is \emph{column
    reduced} when $\deg(\det(P)) = d_1+\dots+d_\mm$. A basis of a free
  $\field[x]$-module is \emph{minimal} if it is in column reduced
  form.
\end{definition}

This terminology is explained by the fact that the column 
degrees of {column reduced} bases are lexicographically {minimal}
among all bases of the module, up to permuting the columns to make these
degrees nondecreasing.
Minimal bases are not unique and the algorithms we use produce them in 
variants of Popov forms.

\begin{definition}
  A matrix $P=(p_{ij})\in\field[x]^{\mm\times \mm}$ with no zero column is in 
  \emph{weak Popov form} if for all~$i\neq j$,
\begin{equation}\tag{$C_1$}\label{eq:cond_row}
\deg(p_{ij})\le \deg(p_{jj}),\quad\text{with strict inequality if
$i>j$.}
\end{equation}
It is in
\emph{Popov form} if moreover, for all $i$, 
\begin{equation}\tag{$C_2$}\label{eq:cond_col}
p_{ii}\text{ is monic and }\deg(p_{ii})>\deg(p_{ij}),\quad i\neq j.
\end{equation}
Finally, for a \emph{shift}
\(\bm{s} = (s_0,\ldots,s_{\mm-1}) \in \ZZ^\mm\), the matrix is in
\emph{$\bm{s}$-Popov form} if it satisfies \eqref{eq:cond_col},
while \eqref{eq:cond_row} is replaced by
\[
\deg(p_{ij})+s_i\le \deg(p_{jj})+s_j,\quad\text{with strict inequality
if
$i>j$.}
\]
For any given $\bm{s}$, a \(\xring\)-submodule of \(\xring^\mm\) of
rank \(\mm\) admits a unique basis in \(\bm{s}\)-Popov form.
\end{definition}

\subsection{Computing modulo \texorpdfstring{\(\xring\)}{K[x]}-relations}
\label{sec:modules:divrem_x}

For \(\mm\ge 2\), \(\nodule_\mm\) has a basis
\(\{f,y-a\rem f, \dots,y^{\mm-1}-a^{\mm-1}\rem f\}\),
which is not minimal. Its matrix is triangular, with
determinant $f$. Thus,
all bases of \(\nodule_\mm\) have determinant \(\lambda f\) for some \(\lambda \in \field \setminus \{0\}\).
It turns out that a minimal basis of \(\nodule_\mm\) can be computed
efficiently, along with reductions with respect to this basis.

\begin{proposition}[Minimal basis {\upshape\ulcshape\&} \textsc{reduction}]
  \label{prop:Kx_basis_and_quorem}%
  Let \(n \ge \mm > 0\), let \(f\in\xring\) have degree \(n\), and let \(a \in
  \xring_{<n}\). The minimal basis in Popov form~$R\in \xmatspace{\mm}{\mm}$ of
  \(\nodule_\mm\) from \cref{eq:Nm} can be computed
  using \(\softO{\mm^{\expmm-1} n}\) operations in~\(\field\) and
  satisfies \(\lceil n/\mm \rceil \le \deg(R) \le n\).
  Given \(\ell\) polynomials \(u_0, \ldots, u_{\ell-1}\) in \(\xring_ {<n}\),
   using \(\softO{\lceil \ell / \mm \rceil \mm^{\expmm-1} n}\)
  operations in \(\field\)
  one can compute polynomials \(U_0, \ldots, U_{\ell-1}\) in
  \(\xyring_{<(\dd,\mm)}\) such that \(U_j(x,a) = u_j \bmod f\), where
  \(\dd = \deg(R)\).
\end{proposition}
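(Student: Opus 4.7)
The plan is to start from an explicit generating set for $\nodule_\mm$, deduce the degree bounds from the value of $\det R$, then invoke polynomial-matrix subroutines for both the Popov-form computation and the reduction of the $u_j$.

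Identifying $\nodule_\mm$ with a submodule of $\xvecspace{\mm}$ via the coefficient map $\sum_i p_i(x)\, y^i \leftrightarrow \trsp{(p_0, \dots, p_{\mm-1})}$, the $\mm$ relations $f$ and $y^i - a^i \rem f$ (for $1 \le i \le \mm - 1$) form the columns of an upper-triangular matrix with diagonal $(f, 1, \dots, 1)$, hence of determinant $f$. A short check (reducing any $p \in \nodule_\mm$ by the non-leading generators and using $p(x,a) \equiv 0 \pmod f$ to see that the remaining scalar part is a multiple of $f$) shows that these $\mm$ relations generate $\nodule_\mm$. Consequently every basis of $\nodule_\mm$, and in particular $R$, has determinant equal to $f$ up to a nonzero scalar in $\field$. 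Since $R$ is in Popov form, $\deg(\det R) = \sum_i \deg R_{ii} = n$ and $\deg R = \max_i \deg R_{ii}$, so $\deg R \le n$ and $\deg R \ge \lceil n/\mm \rceil$ by pigeonhole.

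To compute $R$: note that $\nodule_\mm$ coincides with the kernel modulo $f$ of the column $v = \trsp{(1, a \rem f, \dots, a^{\mm-1} \rem f)} \in \xvecspace{\mm}$, in the sense that $\trsp{p}\, v \equiv 0 \pmod f$ iff $p$ encodes an element of $\nodule_\mm$. I would first compute the powers $a^i \rem f$ for $1 \le i < \mm$ in $\softO{\mm n}$ operations, then invoke a minimal Popov kernel-basis algorithm modulo $f$ on $v$ to produce $R$ in $\softO{\mm^{\expmm-1} n}$ operations; this is the canonical regime for such algorithms (single modular right-hand side of degree less than the modulus, which has degree $n$). For the reduction, set $w_j := \trsp{(u_j, 0, \dots, 0)} \in \xvecspace{\mm}$, which encodes the bivariate polynomial $u_j$. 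Polynomial matrix division by $R$ yields $w_j = R q_j + \hat w_j$ with $\deg(\hat w_j)_i < \deg R_{ii} \le \dd$. Since $R q_j$ encodes an element of $\nodule_\mm$, the polynomial $U_j := \sum_i (\hat w_j)_i(x)\, y^i \in \xyring_{<(\dd, \mm)}$ satisfies $U_j(x, a) \equiv u_j \pmod f$. Grouping the $u_j$ into $\lceil \ell/\mm \rceil$ batches of at most $\mm$ vectors each, a single batch division against $R$ costs $\softO{\mm^{\expmm-1} n}$ by standard polynomial matrix division against a Popov-form divisor with column-degree sum $n$, yielding the total $\softO{\lceil \ell/\mm \rceil\, \mm^{\expmm-1} n}$.

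The main obstacle is invoking the right polynomial-matrix subroutines—a minimal Popov kernel basis modulo $f$, and polynomial matrix division by a Popov-form matrix—with the target cost $\softO{\mm^{\expmm-1} n}$ in our specific regime. These have been developed in the recent line of work on minimal approximant and kernel bases and associated modular reductions (Giorgi--Jeannerod--Villard, Jeannerod--Neiger--Villard, Zhou--Labahn, Neiger--Vu), but applying them here requires careful bookkeeping of the shifted column-degree profile of $R$; the rest of the argument is structural.
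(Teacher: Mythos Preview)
Your argument is correct. The computation of $R$ and the degree bounds match the paper exactly: both invoke a modular Popov kernel-basis solver on the column $\trsp{(1,a,\ldots,a^{\mm-1})}$ and deduce $\lceil n/\mm\rceil \le \deg R \le n$ from $\deg\det R = \deg f = n$. The reduction step, however, is handled differently.

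You compute $R$ first and then reduce each $w_j = \trsp{(u_j,0,\ldots,0)}$ by explicit polynomial matrix division against $R$, asserting cost $\softO{\mm^{\expmm-1} n}$ per batch of~$\mm$. The paper instead solves a single \emph{extended} modular kernel problem: it computes the $\bm{s}$-Popov basis, with shift $\bm{s}=(0,\ldots,0,n,\ldots,n)\in\ZZ^{\mm+\ell}$, of all $(p_0,\ldots,p_{\mm-1},q_0,\ldots,q_{\ell-1})$ satisfying $\sum_i \bar a_i p_i - \sum_j u_j q_j \equiv 0 \pmod f$. A short block-structure argument then shows this basis is $\left[\begin{smallmatrix} R & P_1 \\ 0 & I_\ell \end{smallmatrix}\right]$, where the columns of $P_1$ encode precisely the $U_j$; the cost follows from a single call to \cite[Thm.~1.4]{Neiger16} on a system of width $\mm+\ell$.

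What each buys: your route is more modular (compute the basis, then divide), but the division cost in your regime---dividend of degree~$n$, Popov divisor with column-degree \emph{sum}~$n$ but max degree possibly up to~$n$---is not the textbook case and does need the unbalanced-degree bookkeeping you yourself flag as the main obstacle. The paper's route trades that modularity for economy of citation: one shifted-Popov call delivers both $R$ and the remainders, and the block form $Q_0=0$, $Q_1=I_\ell$ is verified in a few lines from the shift choice, so no separate division analysis is required.
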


\begin{proof}
  Let \(\bar{a}_i = a^i \rem f\), \(0 \le i < \mm\). The Popov basis~$R$
  (with shift \((0,\ldots,0)\)) for the module $\nodule_\mm$ of solutions of the linear
  system
  \begin{equation}\label{eq:ap_eq_0_mod_f}
    \bar{a}_0p_0+\dots+\bar{a}_{\mm-1}p_{\mm-1}=0\bmod f
  \end{equation}
  can be computed in \(\softO{\mm^{\expmm-1} n}\) operations in \(\field\)
  \cite[Thm.\,1.4]{Neiger16}. The degree \(\dd = \deg(R)\) satisfies \(\lceil
  n/\mm \rceil \le \dd \le n\), since the sum of the column degrees of \(R\) is
  exactly \(\deg(\det(R)) = \deg(f) = n\).

  In fact, both \(R\) and \(U_0,\ldots,U_{\ell-1}\) are computed in
  \(\softO{(\mm+\ell)^{\expmm-1} n}\) operations in \(\field\), using the same
  algorithm with the equation
  \begin{equation}\label{eq:division_mod_f}
    \bar{a}_0p_0+\dots+\bar{a}_{\mm-1}p_{\mm-1} - u_0q_0-\dots-u_{\ell-1}q_{\ell-1}=0\bmod f
  \end{equation}
  and the shift \(\bm{s} = (0,\ldots,0,n,\ldots,n) \in \ZZ^{\mm+\ell}\),
  where \(0\) appears \(\mm\) times and \(n\) appears \(\ell\) times,
  the unknowns being the $p_i$'s and~$q_j$'s.
  Indeed, writing the corresponding basis as
  \[
    P =
    \begin{bmatrix}
      P_{0} & P_{1} \\
      Q_{0} & Q_{1}
    \end{bmatrix}
    \in \xmatspace{(\mm+\ell)}{(\mm+\ell)},
  \]
  where \(P_{0}\) is \(\mm \times \mm\) and \(Q_{1}\) is \(\ell \times \ell\),
  we now show that \(P_{0} = R\), \(Q_{0} = 0\), \(Q_{1} = I_\ell\)
  and, most importantly, \(P_{1} = [U_{ij}]\), with \(U_j = \sum_{0 \le i <
  \mm} U_{ij} y^i\).

  Note first that, by definition of \(\bm{s}\)-Popov forms, the principal
  submatrices \(P_{0}\) and \(Q_{1}\) are in \((0,\ldots,0)\)- and
  \((n,\ldots,n)\)-Popov form, respectively. Also, \cref{eq:cond_col}
  implies \(\deg(P) \le \deg(\det(P))\), while this is at most $\deg(f) =
  n$, by the same reasoning as for~$\nodule_\mm$ above.
  Thus, the choice of shift \(\bm{s}\) and the definition of
  \(\bm{s}\)-Popov form ensure that \(\deg(Q_{0}) + n < \deg(P_{0}) \le
  n\), hence \(Q_{0} = 0\). This implies that \(P_{0}\) is the
  \((0,\ldots,0)\)-Popov basis of the solutions of 
  \cref{eq:ap_eq_0_mod_f}, hence \(P_{0} = R\) by uniqueness.

  Next, since both \(\det(P)\) and \(\det(R)\) are equal to \(f\) up to
  multiplication by a nonzero constant, \(\det(Q_{1})\) is itself a constant,
  and thus \(Q_{1} = I_{\ell}\), since the identity matrix is the unique
  unimodular matrix in \((n,\ldots,n)\)-Popov form.

  As a result, writing \(P_{1} = [\bar{U}_{ij}]\) for the entries of
  \(P_{1}\), the fact that the \((\mm+j)\)-th column of \(P\) is a
  solution of \cref{eq:division_mod_f} implies that
  \[
    \sum_{i < \mm} \bar a_i \bar{U}_{ij} = u_j \bmod f,
    \text{ that is},
    \sum_{i < \mm} \bar{U}_{ij} a^i = u_{j} \bmod f.
  \]
  Finally, \(\deg(\bar{U}_{ij}) < \deg(R_{ii})\) follows from the fact that
  \(P\) is in \(\bm{s}\)-Popov form. By uniqueness of \(P\), there is only one
  set of polynomials \(\bar{U}_{ij}\) satisfying these degree bounds and
  \cref{eq:division_mod_f}, hence \(\bar{U}_{ij} = U_{ij}\).

  The claimed cost bound comes from applying the above method on 
  \(\lceil \ell / \mm \rceil\) lists, each containing \(\le \mm\) of the
  polynomials \(u_i\).
\end{proof}

The composition algorithm, and its components in
\cref{sec:bivmodcomp,sec:trunc_powers}, use the following consequence of
\cref{prop:Kx_basis_and_quorem}.

\begin{corollary}
  \label{cor:AjBj}%
  Let \(n \ge \mm > 0\), let \(f\in\xring\) have degree \(n\), let
  $a\in\xring_{<n}$, and let $\dd$ be the degree of a minimal basis of
  $\nodule_{\mm}$. Using $\softO{\mm^{\expmm-1}n}$ operations
  in~$\field$, one can compute polynomials $A_j,B_j$
  in~$\xyring_{<(\dd,\mm)}$ for $0\le j<\mm$, such that
  $A_{j}(x,a)=a^{j\mm}\bmod f$ and $B_{j}(x,a)=a^{j\mm^2}\bmod f$ for
  all \(j\).
\end{corollary}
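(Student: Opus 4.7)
The plan is to reduce the problem to a direct application of \cref{prop:Kx_basis_and_quorem}, taking as input the list of $2\mm$ polynomials
\[
  u_j = a^{j\mm} \bmod f \quad\text{for } 0 \le j < \mm,
  \qquad u_{\mm+j} = a^{j\mm^2} \bmod f \quad\text{for } 0 \le j < \mm.
\]
Applying that proposition with \(\ell = 2\mm\), one simultaneously computes the minimal Popov basis of \(\nodule_\mm\) and bivariate polynomials \(U_0, \ldots, U_{2\mm-1}\) in \(\xyring_{<(\dd,\mm)}\) satisfying \(U_j(x,a) = u_j \bmod f\), at a cost of \(\softO{\lceil 2\mm/\mm\rceil \mm^{\expmm-1} n} = \softO{\mm^{\expmm-1} n}\) operations in \(\field\). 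Setting \(A_j := U_j\) and \(B_j := U_{\mm+j}\) for \(0 \le j < \mm\) then yields the polynomials required by the statement, with the prescribed degree bounds.

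The only preliminary work is the cheap computation of the \(2\mm\) modular powers \(u_j\) themselves. First I would compute \(\bar a_{\mm} := a^{\mm} \bmod f\) by repeated squaring modulo \(f\), in \(\softO{n}\) operations, then \(\bar a_{\mm^2} := \bar a_\mm^{\mm} \bmod f\) in the same way. From these two giant steps, the remaining powers \(a^{j\mm} \bmod f\) and \(a^{j\mm^2} \bmod f\) for \(1 \le j < \mm\) are obtained by \(\mm-1\) successive multiplications modulo \(f\) by \(\bar a_\mm\) and \(\bar a_{\mm^2}\), respectively, for a total of \(\softO{\mm n}\) operations. This is absorbed in the final bound \(\softO{\mm^{\expmm-1} n}\) since \(\expmm \ge 2\).

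Since \cref{prop:Kx_basis_and_quorem} already packages together both the construction of the minimal basis of \(\nodule_\mm\) and the simultaneous reduction of arbitrary input polynomials by this basis within the target complexity, no substantial obstacle remains: the corollary follows from the proposition by choosing the right inputs and verifying that the baby-step preparation of the \(u_j\)'s fits within the same asymptotic cost.
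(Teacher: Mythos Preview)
Your proposal is correct and follows essentially the same approach as the paper's proof: compute the $2\mm$ modular powers $a^{j\mm}\rem f$ and $a^{j\mm^2}\rem f$ in $\softO{\mm n}$ operations, then apply \cref{prop:Kx_basis_and_quorem} with $\ell=2\mm$ to obtain the $A_j$ and $B_j$ in $\softO{\mm^{\expmm-1}n}$ operations. The only cosmetic difference is that you use repeated squaring for the two giant steps $a^\mm$ and $a^{\mm^2}$, whereas the paper simply says ``computed iteratively''; both fit within $\softO{\mm n}$.
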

\begin{proof}
  By \cref{prop:Kx_basis_and_quorem}, the sought \(\ell = 2\mm\) polynomials are
  found in $\softO{\mm^{\expmm-1}n}$ from
  $a_j=a^{j\mm}\rem f$ and $a_{\mm+j}=a^{j\mm^2}\rem f$ for $j=0,\dots,\mm-1$,
  themselves computed iteratively in $\softO{\mm n}$ ops.
\end{proof}

The above results come from deterministic algorithms which support any \(a\)
and \(f\). In modular composition, we use the following bound that
holds generically. The proof is in \cref{appendix:genericity}.
\begin{lemma}
  \label{lem:generic_bases_Nm}%
  Let \(n \ge \mm > 0\) and \(f\in\xring\) have degree \(n\). For a generic  $a \in \xring_{<n}$,
 any minimal basis of $\nodule_\mm$ has degree $\lceil n/\mm\rceil$.
\end{lemma}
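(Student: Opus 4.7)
The plan is to encode the degree constraint on every minimal basis of $\nodule_\mm$ as a single rank condition on a linear evaluation map, and then to exhibit one explicit $a$ for which this rank is maximal, which guarantees the condition holds on a non-empty Zariski-open set.

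Set $k_0 = \lceil n/\mm \rceil$. By \cref{prop:Kx_basis_and_quorem}, any minimal basis $R$ of $\nodule_\mm$ has determinant $\lambda f$ for some nonzero $\lambda \in \field$, so its column degrees $d_1, \ldots, d_\mm$ sum to $n$ and therefore $\deg R = \max_j d_j \ge k_0$. Moreover this multiset of degrees is an invariant of $\nodule_\mm$, a standard consequence of column reducedness, so every minimal basis has the same degree. The trivial case $\mm = 1$ (where $\nodule_1 = \genBy{f}$ admits the basis $\{f\}$ of degree $n = \lceil n/1 \rceil$) being immediate, I may assume $\mm \ge 2$, and in particular $n \ge 2$.

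For the reverse bound $\deg R \le k_0$, I would introduce the $\field$-linear evaluation map
\[
  \Phi_{k_0}\colon \xyring_{<(k_0,\mm)} \longrightarrow \xring/\genBy{f}, \qquad p \longmapsto p(x,a)\bmod f,
\]
whose kernel is $V_{k_0} = \nodule_\mm \cap \xyring_{<(k_0,\mm)}$. Decomposing an element of $V_{k_0}$ on the columns of $R$ and using column reducedness to rule out leading-term cancellations gives the standard formula $\dim_\field V_{k_0} = \sum_{j=1}^\mm \max(0, k_0 - d_j)$. From $\sum_j d_j = n$, this sum is $\ge k_0\mm - n$, with equality if and only if $d_j \le k_0$ for every $j$. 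On the other hand, by rank-nullity $\dim V_{k_0} = k_0\mm - \rank{\Phi_{k_0}} \ge k_0\mm - n$, with equality if and only if $\Phi_{k_0}$ is surjective. Combining these two chains yields the key equivalence: $\deg R \le k_0$ if and only if $\Phi_{k_0}$ is surjective.

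Surjectivity of $\Phi_{k_0}$ is cut out by the non-vanishing of some $n \times n$ minor of its matrix, whose entries are polynomial in the coefficients of $a$, so it defines a Zariski-open subset of $\overline{\field}{}^n$. To see this subset is non-empty I would take $a = x^{k_0}$, which lies in $\xring_{<n}$ since $k_0 \le \lceil n/2 \rceil < n$: for this $a$, the image of $\Phi_{k_0}$ contains the residues modulo $f$ of $\{x^i a^j : 0 \le i < k_0,\, 0 \le j < \mm\} = \{x^\ell : 0 \le \ell < k_0\mm\}$, which already contains a $\field$-basis of $\xring/\genBy{f}$ since $k_0\mm \ge n$. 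The main technical step is the first one, turning a statement about the full tuple of column degrees into a rank condition on one linear map through the interplay between column reducedness and dimension counting; once that reduction is in place, the choice $a = x^{k_0}$ finishes the argument with no further work.
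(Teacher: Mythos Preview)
Your argument is correct and follows the same overall strategy as the paper: reduce the bound $\deg R \le k_0$ to a full-rank condition on the evaluation map $p\mapsto p(x,a)\bmod f$ on $\xyring_{<(k_0,\mm)}$, observe that this is a Zariski-open condition in the coefficients of~$a$, and certify non-emptiness with the specific choice $a=x^{k_0}$.

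The differences lie in the sub-arguments. The paper phrases the rank condition via the block Krylov matrix $[V_a~M_xV_a~\cdots~M_x^{k_0-1}V_a]$, which is exactly the matrix of your~$\Phi_{k_0}$, and proves the equivalence with the degree bound (its Lemma~A.1) by constructing, from a rank drop, an explicit monic matrix relation and vice versa. You obtain the same equivalence more directly through the predictable-degree property and a dimension count. For the witness $a=x^{k_0}$, the paper goes through an explicit matrix factorisation to exhibit a basis of $\nodule_\mm$ of degree~$k_0$, whereas you simply note that the image of $\Phi_{k_0}$ already contains all monomials $x^\ell$ with $\ell<k_0\mm\ge n$. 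Your route is shorter and avoids the basis construction; the paper's route has the side benefit of displaying the actual Popov-like basis for that~$a$, and its Krylov formulation connects to the controllability/realisation literature it cites.
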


The genericity is characterized by a nonzero polynomial $\witnessN_{f,\mm}\in \field[\bar a_0,\ldots \bar a_{n-1}]$ 
(the $\bar a_i$’s are new
indeterminates) whose zero set must be avoided by the coefficients of
$a$; it is given in \cref{appendix:genericity}. 
The detection of non-generic input is straightforward by inspecting the
degrees in a minimal basis of $\nodule_\mm$, whose efficient
computation is independent from genericity aspects, as seen above.

\subsection{Computing modulo \texorpdfstring{\(\yring\)}{K[y]}-relations}
\label{sec:modules:divrem_y}

The module~$\module_\mm$ from \cref{eq:Mm} is more delicate to compute
with. We rely on our previous work on this module~\cite{NSSV24}.
Minimal bases of~$\module _\mm$ have degree $\lceil n/\mm\rceil$ for
generic $a$ (see \cref{appendix:genericityMm}).
However, for their computation, we exploit a truncation
technique whose success requires additional genericity conditions.

\begin{proposition}[Minimal basis {\cite[Algo.\,5.1 and \S\,7.3.2]
{NSSV24}}]%
\label{prop:genericbasisMm}
Let \(n \ge \mm > 0\) and $f \in \xring$ have degree $n > 0$ with $f(0)\neq 0$.
 There is an algorithm
  which takes as input any $a \in \xring_{<n}$ with $\gcd(a,f)=1$ along with
  the truncated powers
  \[
    [x^{i}a^{-k-1}\rem f]_0^{\mm-1} \quad\text{for } 0 \le i < \mm \text{ and } 0 \le k < 2\dd,
  \]
  and, using $\softO{\mm^{\expmm-1}n}$ operations in \(\field\), either returns a
  minimal basis of \(\module_\mm\) of degree $\dd = \lceil n/\mm\rceil$  for generic $a$ or detects that $a$ does not
  satisfy the genericity condition.
\end{proposition}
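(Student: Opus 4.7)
The plan is to reduce the computation of a minimal basis of \(\module_\mm\) to a fast polynomial matrix approximation problem, following \cite[Algo.\,5.1 and \S\,7.3.2]{NSSV24}. The starting point is that since \(\gcd(a,f)=1\), the element \(a\) is invertible modulo \(f\); hence a candidate relation \(p = \sum_{i=0}^{\dd} p_i(x)\, y^i\) with \(\deg_x(p_i) < \mm\) satisfies \(p(x,a) \equiv 0 \pmod{f}\) if and only if its reverse \(q_j := p_{\dd-j}\) yields a linear dependency \(\sum_{j=0}^{\dd} q_j(x)\, a^{-j} \equiv 0 \pmod{f}\) among the residues \(a^{-j} \rem f\). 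This reformulation is what makes the negative powers \(a^{-k-1}\) the natural input.

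Next, I would restrict attention to the first \(\mm\) coefficients in \(x\) of each residue, converting the search for relations into a minimal approximant basis problem whose only data are exactly the truncated powers \([x^i a^{-k-1} \rem f]_0^{\mm-1}\) for \(0 \le i < \mm\) and \(0 \le k < 2\dd\). Choosing an appropriate shift that reflects the distinct roles of the \(\dd+1\) unknown components \(q_j\) and the slack unknowns encoding the truncation, a minimal approximant basis in shifted Popov form can be computed in \(\softO{\mm^{\expmm-1}n}\) operations by the fast algorithm underlying \cite[Thm.\,1.4]{Neiger16}. Reversing the polynomial coefficients back in~\(y\) then produces a candidate Popov basis of a submodule of \(\xyring_{<(\mm,\dd+1)}\).

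The main obstacle is that the truncation can be lossy: for arbitrary~\(a\), the approximation module may strictly contain the image of \(\module_\mm\), or a minimal basis of this approximation module may have columns of degree strictly larger than \(\dd\). Under the genericity condition, captured by an explicit non-vanishing polynomial on the coefficients of \(a\) (see \cref{appendix:genericityMm}), the two modules coincide and the minimal basis has degree exactly \(\dd = \lceil n/\mm\rceil\), so its column degrees sum to \(\deg(f) = n\). Detection is therefore direct: after computing the candidate basis, I would verify that its column degrees sum to~\(n\), returning it upon success or signaling that \(a\) fails the genericity condition otherwise, which guarantees that no incorrect output is ever produced.
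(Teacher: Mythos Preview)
Your proposal takes essentially the same route as the paper's proof: both defer to \cite[Algo.\,5.1 and \S\,7.3.2]{NSSV24}, reduce the problem to a polynomial-matrix approximation computed from the truncated negative powers, and invoke a fast shifted-Popov solver to reach the \(\softO{\mm^{\expmm-1}n}\) bound. Two small discrepancies are worth noting. First, the candidate weak Popov matrix returned by Steps~2--3 of \cite[Algo.\,5.1]{NSSV24} has degree at most \(2\dd\) in general, not \(\dd\); the bound \(\deg(R)=\lceil n/\mm\rceil\) only follows \emph{after} genericity is confirmed, so your phrasing ``a submodule of \(\xyring_{<(\mm,\dd+1)}\)'' presupposes what remains to be checked. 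Second, reversal in \(y\) does not preserve Popov form, so ``produces a candidate Popov basis'' is imprecise; the actual mechanism in \cite{NSSV24} goes through a Hankel/matrix-fraction description rather than a naive reversal. Your detection criterion (column degrees summing to \(n\)) is sound---since the approximation module contains the image of \(\module_\mm\) and \(\dim_\field(\yring^\mm/\module_\mm)\le n\), equality of determinant degrees forces the modules to coincide---and it is morally what the \textsc{NoCert} flag of \cite[Algo.\,5.1]{NSSV24} encodes, though the paper simply cites that flag rather than rederiving the test.
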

\begin{proof}
  Algorithm~5.1 of~\cite{NSSV24} first computes the truncated powers
  of the proposition. Next, its Steps~2 and~3 compute a weak Popov
  matrix \(R \in \ymatspace{\mm}{\mm}\) of degree at most~$2\dd$ in $\softO
  {\mm^{\expmm-1}n}$ operations in $\field$~\cite[Prop.\,5.6 and its
  proof]{NSSV24}.
  For a generic
  $a$ and when $f(0)\neq 0$, this matrix~$R$ is a basis
  of~$\module_\mm$ with determinant of degree~$n$~\cite[\S\,7.3.2,
  p.\,46]{NSSV24}. It follows that \(\deg(R)=\lceil n/\mm
  \rceil\)~\cite[Prop.\,5.6]{NSSV24}.

  The genericity in $a(x)$ is precisely characterized by a polynomial
  $\witnessM_{f,\mm}$ \cite[Prop.\,7.6]{NSSV24} and by the requirement
  $\gcd(a,f)=1$, which can be characterized as the non-cancellation of the
  resultant $\operatorname{Res}(a,f)$. The detection of non-generic
  input is performed algorithmically by the flag {\sc NoCert} in
  Algo.\,5.1 of \cite{NSSV24}.
\end{proof}

For the reduction of a polynomial in \(\xyring_{<(\mm,\cdot)}\)
with respect to this basis, we
rely on a minimal nullspace basis computation~\cite{ZLS12}.

\begin{lemma}[Reduction]\label{lem:divrem_y}%
  %% Note that here, we do not require R to be a basis, can be a right-multiple of some basis
  Given a nonsingular \(R\) in \(\ymatspace{\mm}{\mm}\) of degree \(\le \dd\) and
  whose columns are relations of \(\module_\mm\), and given a polynomial
  \(g(x,y)\) in \(\xyring_{<(\mm,\cdot)}\) of \(y\)-degree in \(\bigO{n}\), one
  can compute a polynomial \(G(x,y)\) in \(\xyring_{<(\mm,\dd)}\) such that
  \(G(x,a) = g(x,a) \bmod f\),
  using \(\softO{\mm^\expmm (\dd + n/\mm)}\) operations in \(\field\).
\end{lemma}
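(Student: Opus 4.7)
The plan is to carry out the reduction as a polynomial matrix division in $y$. Writing $g(x,y) = \sum_{i=0}^{\mm-1} g_i(y) x^i$ produces a column vector $\mathbf{g}(y) = (g_0,\dots,g_{\mm-1})^{\mathsf{T}} \in \yring^\mm$ of $y$-degree in $\bigO{n}$. Since each column of $R$ is a relation, every element of the $\yring$-submodule $R \cdot \yring^\mm$ evaluates to $0$ modulo $f$ at $y=a$. It therefore suffices to find $\mathbf{q}, \mathbf{G} \in \yring^\mm$ with
\[
\mathbf{g} = R \mathbf{q} + \mathbf{G}, \qquad \deg(\mathbf{G}) < \dd,
\]
and to set $G(x,y) = \sum_i G_i(y) x^i \in \xyring_{<(\mm,\dd)}$; one then has $G(x,a) = g(x,a) \bmod f$ as required.

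For the division itself, I would form the augmented matrix $M = [R \mid -\mathbf{g}] \in \ymatspace{\mm}{(\mm+1)}$ and invoke the minimal shifted right nullspace basis algorithm of~\cite{ZLS12}. Since $R$ is nonsingular, the right kernel of $M$ has rank one; under a shift that is uniform on the first $\mm$ columns and sufficiently large on the last, the unique minimal nullspace vector has its last coordinate a nonzero constant, and its first $\mm$ coordinates yield the sought quotient $\mathbf{q}$ after scaling. The remainder is then recovered as $\mathbf{G} = \mathbf{g} - R\mathbf{q}$, computed by splitting $\mathbf{q}$ into $\bigO{n/\dd}$ slices of degree $\dd$, arranging them as columns of an $\mm \times \bigO{n/\dd}$ polynomial matrix $Q$, and evaluating $RQ$ via a single fast rectangular matrix multiplication before recombining the slices.

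For the complexity, the key observation is that~\cite{ZLS12} computes such a nullspace basis with cost essentially proportional to $\mm^{\expmm-1}$ times the sum of the column degrees of the input matrix; here that sum is $\bigO{\mm \dd + n}$, yielding $\softO{\mm^\expmm \dd + \mm^{\expmm-1} n} = \softO{\mm^\expmm (\dd + n/\mm)}$. A direct analysis shows that the subsequent matrix-matrix product $RQ$ fits within the same bound: when $\bigO{n/\dd} \le \mm$ the product is $\softO{\mm^\expmm \dd}$, and otherwise the rectangular shape yields $\softO{\mm^{\expmm-1} n}$. The main obstacle is precisely this cost estimate, since the strong column-degree imbalance in $M$ (the degree-$\dd$ block $R$ versus the degree-$\bigO{n}$ column $-\mathbf{g}$) must be handled so that the last column contributes a single additive $\softO{\mm^{\expmm-1} n}$ term rather than a multiplicative factor $\bigO{n/\dd}$. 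A secondary subtlety is that the hypothesis only assumes nonsingularity and a degree bound on $R$, so if $R$ is not already column reduced, a preliminary minimal-basis reduction (fitting within the same budget via the same machinery) is required to guarantee the coordinate-wise bound $\deg(\mathbf{G}) < \dd$.
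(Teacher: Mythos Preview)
Your overall strategy---represent $g$ as a column vector $\mathbf{g}\in\yring^\mm$, perform polynomial-matrix division by $R$, and invoke the kernel-basis machinery of~\cite{ZLS12}---is exactly the route the paper takes (it defers the details to \cite[Sec.~4.2]{NSSV24}). The complexity bookkeeping and the remark about column-reducedness are also on target.

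However, the specific kernel setup is wrong. A right-kernel vector $(\mathbf{q},c)$ of $M=[R\mid -\mathbf{g}]$ satisfies $R\mathbf{q}=c\,\mathbf{g}$, not $R\mathbf{q}=\mathbf{g}-\mathbf{G}$. Since $\mathbf{g}$ is in general \emph{not} in the column module $R\,\yring^\mm$ (it would be only when $g(x,a)\equiv 0\bmod f$), the last coordinate $c$ of any kernel generator is the least common denominator of $R^{-1}\mathbf{g}$, hence a nonconstant polynomial dividing $\det(R)$. No choice of shift can repair this: the kernel has rank one, so every $\mathbf{s}$-minimal basis element is a $\yring$-multiple of that generator, and its last coordinate keeps degree $\ge 1$. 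A minimal example is $\mm=1$, $R=(y)$, $\mathbf{g}=(1)$: the kernel of $[\,y\;\; -1\,]$ is generated by $(1,y)$, whose last coordinate is $y$.

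What you actually need is the \emph{polynomial part} of $R^{-1}\mathbf{g}$ (equivalently, a vector $\mathbf{q}$ matching the high-order coefficients of $\mathbf{g}$ through $R$), not a solution of $R\mathbf{q}=c\mathbf{g}$. The formulation in \cite[Sec.~4.2]{NSSV24} sets up the kernel problem so that both quotient and low-degree remainder are read off simultaneously; once that is fixed, the rest of your analysis (the $\softO{\mm^\expmm(\dd+n/\mm)}$ bound, the slice-and-multiply recovery of $\mathbf{G}$) goes through.
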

For a detailed algorithm in the specific case of this lemma, and
a proof of the 
complexity bound, we refer to \cite[Sec.\,4.2]{NSSV24}.

%%%%%%%%%%%%%%%%%%%%%%%%%%%%%%%%%%%%%%%%%%%%%%%%%%%%%%%%%%%%%%%%%%%%%%%%%%%%%%%%%%%
%
%              BIVARIATE MODULAR COMPOSITION
%
%%%%%%%%%%%%%%%%%%%%%%%%%%%%%%%%%%%%%%%%%%%%%%%%%%%%%%%%%%%%%%%%%%%%%%%%%%%%%%%%%%%

\section{Bivariate Modular Composition}
\label{sec:bivmodcomp}

In this section, we are given $f$ in $\xring$ of degree $n$, $a$ in
$\xring_{<n}$, and $G\in\xyring_{<(\mmu,\ddelta)}$ for some positive
integers $\mmu, \ddelta$, and we prove the following proposition on the
cost of computing $G(x,a)\bmod f$. \cref{theo:bivmodcomp} will easily
follow (in that theorem, we suppose that $\mmu \ddelta$ is $\Theta(n)$,
and we assume $a$ is generic).

\begin{proposition}
  \label{prop:bivmodcomp}%
  Consider integers $n,\mmu,d$ and $\mm$, with $d^{1/3} \le \mm \le
  n$.  Given $f$ in $\xring$ of degree $n$, $a$ in $\xring_{<n}$,
  $G\in\xyring_{<(\mmu,\ddelta)}$, and the polynomials $(A_j)_{0 \le j
    < \mm}$ and $(B_j)_{0 \le j < \mm}$ from \cref{cor:AjBj}, one can
  compute the composition $G(x,a)\bmod f$ in
  $\softO{\mm^{\expmm}(\dd+\mmu \mm)}$ operations in $\field$, where
  $\dd$ is the degree of a minimal basis of $\nodule_{\mm}$.
\end{proposition}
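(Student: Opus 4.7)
The plan is to carry out Steps~\ref{new-algo:barG}--\ref{new-algo:last-step} of the introduction: use the $A_{j_1}$'s and $B_{j_2}$'s to reduce $G(x,a) \bmod f$ to a single bivariate polynomial matrix multiplication followed by a short evaluation. The hypothesis $\dd^{1/3} \le \mm$ gives $\mm^3 \ge \dd$, so any $G \in \xyring_{<(\mmu,\ddelta)}$ admits a trivariate expansion $G(x,y) = \bar G(x,y,y^\mm,y^{\mm^2})$ with $\bar G \in \field[x,y_0,y_1,y_2]_{<(\mmu,\mm,\mm,\mm)}$, obtained by linear-time regrouping of coefficients. Writing $\bar G = \sum_{0 \le j_1,j_2 < \mm} \bar g_{j_1,j_2}(x,y_0)\,y_1^{j_1} y_2^{j_2}$ with $\bar g_{j_1,j_2} \in \field[x,y_0]_{<(\mmu,\mm)}$, the defining relations $A_{j_1}(x,a) \equiv a^{j_1\mm}$ and $B_{j_2}(x,a) \equiv a^{j_2\mm^2}$ modulo $f$ yield
\[
  G(x,a) \equiv \Phi(x,a) \pmod f, \qquad \Phi(x,y) := \sum_{j_1,j_2} \bar g_{j_1,j_2}(x,y)\,A_{j_1}(x,y)\,B_{j_2}(x,y) \in \xyring,
\]
so it suffices to compute $\Phi$ and then evaluate it at $y=a$ modulo $f$.

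To compute $\Phi$, I would factor $\Phi = \sum_{j_2} B_{j_2}(x,y)\,W_{j_2}(x,y)$ with $W_{j_2}(x,y) := \sum_{j_1} \bar g_{j_1,j_2}(x,y)\,A_{j_1}(x,y)$, and compute all the $W_{j_2}$'s simultaneously by a single matrix multiplication after slicing $A_{j_1}$ along $x$. With $\dd' := \lceil \dd/\mm \rceil$, write $A_{j_1}(x,y) = \sum_{k=0}^{\mm-1} x^{k\dd'}\,\tilde A_{j_1,k}(x,y)$ with $\tilde A_{j_1,k} \in \field[x,y]_{<(\dd',\mm)}$. Assembling the $\mm\times\mm$ bivariate polynomial matrices $U := (\bar g_{j_1,j_2})_{j_2,j_1} \in \field[x,y]^{\mm\times\mm}_{<(\mmu,\mm)}$ and $\tilde A := (\tilde A_{j_1,k})_{j_1,k} \in \field[x,y]^{\mm\times\mm}_{<(\dd',\mm)}$, the product $Q := U\tilde A$ has entries $Q_{j_2,k_1} = \sum_{j_1} \bar g_{j_1,j_2}\,\tilde A_{j_1,k_1}$, from which $W_{j_2}(x,y) = \sum_{k_1} x^{k_1\dd'}\,Q_{j_2,k_1}(x,y)$ is recovered by aligning slices.

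This matrix multiplication dominates the overall cost. Using fast bivariate polynomial matrix multiplication, the product of two $\mm \times \mm$ matrices over $\field[x,y]$ with entries of bidegrees $<(\mmu,\mm)$ and $<(\dd',\mm)$ is computed in $\softO{\mm^\expmm(\mmu+\dd')\mm} = \softO{\mm^\expmm(\mmu\mm+\dd)}$ operations in $\field$. The remaining steps fit within this bound: computing $\Phi = \sum_{j_2} B_{j_2}W_{j_2}$ uses $\mm$ bivariate polynomial multiplications for a total of $\softO{\mm^2(\mmu+\dd)}$ operations, and evaluating $\Phi(x,a) \bmod f$ proceeds by first computing $a^k \bmod f$ for $0 \le k < 3\mm$ iteratively in $\softO{\mm n}$, then combining these with the $y$-coefficients of $\Phi$ modulo $f$ in $\softO{\mm(\mmu+\dd+n)}$ operations. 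The bound $\dd \ge \lceil n/\mm \rceil$ from \cref{prop:Kx_basis_and_quorem} gives $\mm n \le \mm^2 \dd \le \mm^\expmm \dd$, absorbing the evaluation cost. The delicate point I expect is the matrix layout: one must slice $A_{j_1}$ in $x$ so that a single $\mm \times \mm$ bivariate matrix product simultaneously produces all the $W_{j_2}$'s and its cost exactly matches the claimed $\softO{\mm^\expmm(\dd+\mmu\mm)}$ target.
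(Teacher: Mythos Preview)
Your proposal is correct and follows essentially the same approach as the paper: the trivariate expansion $\bar G$, the simultaneous computation of the $W_{j_2}$'s (the paper's $s_{i_2}$'s) via one $\mm\times\mm$ bivariate matrix product after slicing the $A_{j_1}$'s in $x$ into $\mm$ pieces of $x$-degree $<\lceil\dd/\mm\rceil$, the naive loop for $\sum_{j_2} B_{j_2} W_{j_2}$, and the final Horner-type evaluation in $y$ modulo $f$, with the same cost accounting throughout. Your explicit $x^{\dd'}$-adic slicing of $A_{j_1}$ is exactly the mechanism the paper uses to turn the matrix--vector product into an $\mm\times\mm$ matrix product landing in the target bound.
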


\subsection{Algorithm and proof of Prop.~\ref{prop:bivmodcomp} \texorpdfstring{\&}{and} Thm.~\ref{theo:bivmodcomp}.}
\label{sec:bivmodcomp:algo}

The principle is to use an inverse Kronecker substitution. Since
$\ddelta \le \mm^3$, we can write any index $i < \ddelta$ as
$i=i_0+i_1\mm+i_2\mm^{2}$ with all $0\le i_j<\mm$, and then map the
monomial $y^i$ to $y_0^{i_0}y_1^{i_1}y_2^{i_2}$. By linearity this
gives a map
\begin{align*}
  \xyring_{<(\mmu,\ddelta)} & \to     \field[x,y_0,y_1,y_2]_{<(\mmu,\mm,\mm,\mm)} \\
  G(x,y)                  & \mapsto \bar{G}(x,y_0,y_1,y_2).
\end{align*}
The evaluation at \(y = a \bmod f\) amounts to computing modulo the
collection of relations \(y^j = a^j \bmod f\) for \(0 \le j <
\ddelta\), in the \(\xring\)-module
\(\xyring_{<(\cdot,\ddelta)}\). Through the above map, 
for \(0 \le j < \mm\),
these relations become
\[
  y_0^j = a^j \bmod f;\ 
  y_1^j = A_j(x,y_0) \bmod f;\ 
  y_2^j = B_j(x,y_0) \bmod f
\]
that we use to compute
$S(x,y_0) = \bar{G}(x,y_0,a^\mm,a^{\mm^2}) \bmod f$.  
\begin{description}[leftmargin=0cm,labelindent=0.15cm,font=\mdseries\itshape]
\item[Step 1: computing partial sums.] \!%
Writing the
coefficients of $\bar G$ as
\[
  \bar{G}(x,y_0,y_1,y_2)
  =
  \sum_{0 \le i_1,i_2 < \mm}s_{i_1i_2}(x,y_0)y_1^{i_1}y_2^{i_2},
\]
one first computes the sums
\[
  s_{i_2}
  =
  \sum_{0 \le i_1 < \mm} s_{i_1i_2}(x,y_0)A_{i_1}
  (x,y_0)\in\xynotring_{<(\mmu+\dd,2 \mm)}, \;\; 0\le i_2<{\mm}.
\]
The simultaneous computation of the sums~$s_{i_2}$ can be expressed as
the product of the $\mm\times \mm$ matrix with entries
$s_{i_1i_2}\in\xynotring_{<(\mmu,\mm)}$ by the vector in
$\xynotring_{<(\dd,\mm)}^{\mm}$ with entries $A_{i_1}$. Using the
$y_0{}^{\lceil \dd/\mm\rceil}$-adic expansion of these polynomials, this vector
can be expanded into an $\mm\times \mm$ matrix with entries in
$\xynotring_{<(\lceil \dd/\mm\rceil,\mm)}$. This matrix product can thus be computed at
the announced cost $\softO{\mm^{\expmm}(\dd+\mmu \mm)}$.

\item[Step 2: computing $S(x,y_0)$.] The computation of 
\[
  S(x,y_0)
  =
  \sum_{0 \le i_2 < \mm} s_{i_2}(x,y_0)B_{i_2} (x,y_0)\in\xynotring_{<(\mmu+2\dd,3 \mm)}
\]
is a straightforward loop. Each summand can be computed naively in 
$\softO{\mm(\dd+\mmu)}$ operations in $\field$; the total is a negligible
$\softO{\mm^2(\dd+\mmu)}$.

\item[Step 3: Horner evaluation.]  Finally, given $S(x,y_0)$,
  Horner evaluation at $y_0=a \bmod f$ yields \(S(x,a) \bmod f\) at a
  cost of $\softO{(n + \mmu + \dd)\mm}$ operations. Since \(\dd \ge
  n/\mm\), this is $\softO{ \mm^2\dd + \mmu\mm}$, which is negligible
  compared to the cost of the other operations.
\end{description}

Altogether, the cost is $\softO{\mm^{\expmm}(\dd+\mmu
\mm)}$, which establishes \cref{prop:bivmodcomp}.

In the statement of \cref{theo:bivmodcomp}, our assumption is that
$\mmu \ddelta \in \Theta(n)$, which implies $\ddelta \le n^3$ for $n$
large enough, so the previous proposition applies. In this context, we
choose $\mm=\lceil \ddelta^{1/3}\rceil \in \Theta(\ddelta^{1/3})$.  If
furthermore the coefficients of $a$ belong to the Zariski open set $
\mathcal{V}_{f,d}$ defined with the polynomial~$\Phi_{f,\lceil\ddelta^
{1/3}\rceil}$ 
from \cref{lem:generic_bases_Nm}, the degree $\dd$ is equal to $\lceil
n/\mm\rceil \in \Theta(n/d^{1/3})$.
In addition to the cost in \cref{prop:bivmodcomp}, we first need to
compute all polynomials $A_j$ and $B_j$. By \cref{cor:AjBj}, this is
$\softO{\mm^{\expmm-1}n}$ operations in $\field$, that is,
$\softO{\ddelta^{(\expmm-1)/3}n}$. This is
$\softO{\mmu\ddelta^{(\expmm+2)/3}}$, since $\mmu \ddelta \in \Theta(n)
$. The runtime for bivariate composition itself is
$\softO{\ddelta^{\expmm/3}(n/d^{1/3}+\mmu \ddelta^{1/3})}$ operations
in $\field$, and using again $\mmu \ddelta \in \Theta(n)$, this is
also $\softO{\ddelta^{\expmm/3}(\mmu\ddelta^{2/3}+\mmu
  \ddelta^{1/3})}=\softO{\mmu\ddelta^{(\expmm+2)/3}}$. This proves
\cref{theo:bivmodcomp}.

\subsection{Multipoint evaluation: proof of Thm.~\ref{theo:multipoint}}

\subsubsection*{Proof.}
It follows the steps of Nüsken and Ziegler's algorithm.  From 
 $(x_1,y_1),\dots,(x_n,y_n)$, we first compute $f=\prod(x-x_i)$ and the interpolating
polynomial~$a\in\xring_{<n}$ such that~$a(x_i)=y_i$ for all~$i$, both
in~$\softO{n}$ operations in~$\field$~\cite{GaGe99}.
Given $G$ in $\field[x,y]_{<(\mmu,\ddelta)}$ and from \cref{theo:bivmodcomp}, if the coefficients
of~$a$ 
are in $\mathcal{V}_{f,d} \cap   \field^n$, $h = G(x,a)\bmod f$ is obtained
in~$\softO{\mmu\ddelta^{(\expmm+2)/3}}$ operations in \(\field\).
The desired values are then deduced by
univariate multipoint evaluation of \(h\) at~$x_1,\dots,x_ {n}$, again
in $\softO{n}$ operations in~$\field$.
The points~$(y_1,\dots,y_{n}) \in \field^n$ 
for which the coefficients 
of~$a$ 
are in $\mathcal{V}_{f,d} \cap   \field^n$ are in the image~$\mathcal{W}_{\xi,d}$ of $\mathcal{V}_{f,d}$ by the invertible Vandermonde matrix constructed from~$
\xi=(x_1,\dots,x_{n})$, which is
thus also a non-empty Zariski open set.

\subsubsection*{Arbitrary sets of points}
\Cref{theo:multipoint} and Nüsken and Ziegler's results hold for
points with pairwise distinct $x$-coordinate.
For arbitrary sets of points, a natural approach
(going back to~\cite{NusZie04}) is to apply a random change of
coordinates, since with high probability, this guarantees that all
$x$-coordinates are distinct. 
However, we do not expect this to be
sufficient to have small-degree relation matrices needed for the target complexity bound, 
and further work will be needed here as well to obtain a
Las Vegas algorithm.

When~$m=d=\sqrt n$, Nüsken and Ziegler's algorithm has
complexity~$\softO{n^{(\expmm_2+2)/4}}\subset \bigO{n^{1.313}}$ while 
\cref{theo:multipoint} gives~$\softO {n^{(\expmm+5)/6}} \subset \bigO{n^
{1.229}}$. 
Without changing variables, the best result we are aware of for
arbitrary inputs is in~\cite{HoeLec25}, with a runtime
$\softO{n^{2-2/(\expmm+1)}}\subset \bigO{n^{1.41}}$ operations in
$\field$.
Note that several other
results are known in an \emph{amortized}
model~\cite{NRS20,HoeLec21b,HoeLec21c,HoeLec23,HoeLec25}, some of them
also requiring generic inputs.

%%%%%%%%%%%%%%%%%%%%%%%%%%%%%%%%%%%%%%%%%%%%%%%%%%%%%%%%%%%%%%%%%%%%%%%%%%%%%%%%%%%
%
%              TRUNCATED POWERS
%
%%%%%%%%%%%%%%%%%%%%%%%%%%%%%%%%%%%%%%%%%%%%%%%%%%%%%%%%%%%%%%%%%%%%%%%%%%%%%%%%%%%

\section{Truncated powers}
\label{sec:trunc_powers}

\Cref{new-algo:truncpowers} of the new composition algorithm computes the
truncated powers $[x^ia^{-k}\rem f]_0^{m-1}$ for $0\le i<m,0\le k\le
2d$. Using the
small-degree polynomials \(A_j\) and \(B_j\) in \cref{new-algo:Aij}
lets us improve the previous result for this
computation~\cite [Prop.\,3.6] {NSSV24}.

\begin{proposition}
  \label{prop:truncatedpowers}%
  Consider integers $n,\mmu,d$ and $\mm$, with $d^{1/3} \le \mm \le
  n$.  Given $f$ in $\xring$ of degree $n$, $a$ in $\xring_{<n}$%,
  and the polynomials $(A_j)_{0 \le j
    < \mm}$ and $(B_j)_{0 \le j < \mm}$ from \cref{cor:AjBj}, one
  can compute the truncations \([b a^k\rem f]_0^{\mmu-1}\) for \(0\le
  k<\ddelta\) using $\softO{\mm^\expmm (\dd + \mmu \mm)}$ operations
  in~$\field$, where \(\dd\) is the degree of a minimal basis of
  \(\nodule_\mm\).
\end{proposition}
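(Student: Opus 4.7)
The plan is to construct an algorithm that mirrors, step for step, the bivariate modular composition algorithm of \cref{sec:bivmodcomp:algo}, but runs in reverse as its \emph{transpose}. The underlying reason---to be made precise in \cref{sec:transpositions} via Tellegen's transposition principle---is that the truncated-powers map
\[
  b \longmapsto \bigl([b\,a^k\rem f]_0^{\mmu-1}\bigr)_{0\le k<\ddelta}
\]
and the bivariate modular composition map $G \mapsto G(x,a)\rem f$ of \cref{prop:bivmodcomp} are transposes of one another as $\field$-linear maps, once $f$, $a$, and the polynomials $A_j,B_j$ are treated as fixed constants: indeed, both encode, on monomial bases and under the pairing $\langle u,v\rangle = \sum_{j<n} u_j v_j$, the bilinear form $(G,b)\mapsto$ coefficients of $b\cdot G(x,a)\rem f$ up to an elementary re-indexing. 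Their source and target dimensions swap accordingly, from $\mmu\ddelta\to n$ to $n\to \mmu\ddelta$.

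Concretely, I would transpose the three phases of \cref{sec:bivmodcomp:algo} in reverse order. First, the Horner evaluation at $y_0=a\rem f$ becomes an initial \emph{transposed Horner} applied to $b$: iterating the transpose of ``multiplication by $a$ modulo $f$'' (a middle-product-type operation of the same arithmetic cost as ordinary modular multiplication) produces a bivariate polynomial $\widehat S(x,y_0)$ of the same size as $S(x,y_0)$ in \cref{sec:bivmodcomp:algo}, at cost $\softO{\mm\,n}$. Second, the simple loop $S=\sum_{i_2} s_{i_2}B_{i_2}$ transposes into the extraction of $\mm$ components $\widehat s_{i_2}(x,y_0)\in\xynotring_{<(\mmu+\dd,\,2\mm)}$ from $\widehat S$ via transposed products by the $B_{i_2}(x,y_0)$, at cost $\softO{\mm^2(\dd+\mmu)}$. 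Third, the dominant polynomial-matrix product of \cref{sec:bivmodcomp:algo} transposes into a polynomial-matrix product of the same $\mm\times\mm$ shape and degree profile, now applied to the $\widehat s_{i_2}$ together with the $A_{i_1}$, producing an $\mm\times\mm$ array of bivariate polynomials. After the inverse-Kronecker identification $y^{k_0+k_1\mm+k_2\mm^2}\leftrightarrow y_0^{k_0}y_1^{k_1}y_2^{k_2}$ is transposed to a coefficient extraction, the entries of this array read off as the required truncations $[b\,a^k\rem f]_0^{\mmu-1}$ for all $0\le k<\ddelta$. The cost of this last step is $\softO{\mm^\expmm(\dd+\mmu\mm)}$, matching \cref{prop:bivmodcomp} and dominating the total.

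The main obstacle is the precise transposition of the two bilinear primitives that drive the algorithm: the polynomial-matrix multiplication and the iterated modular multiplication in the Horner stage. The former transposes to a polynomial-matrix multiplication of identical dimensions and entry-degree profile, hence identical cost; the latter transposes to a middle-product modulo $f$, whose complexity equals that of the forward operation by classical transposition results. The remaining care is purely bookkeeping: checking that the $A_j,B_j$ occur in the transposed program exactly as they do in the forward one (they are constants, so the transpose leaves them untouched), and that the inverse-Kronecker substitution and the $\mmu$-truncation extraction dualise to the matching linear operations on the output side. All these verifications fit naturally in the framework of \cref{sec:transpositions}, from which the claimed complexity then follows.
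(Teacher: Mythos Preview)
Your approach is genuinely different from the paper's, and the underlying intuition is sound --- the paper itself discusses the transposition route in \cref{sec:transpositions} as an alternative path. However, the central claim that the truncated-powers map and the bivariate-composition map are transposes of one another ``up to an elementary re-indexing'' under the standard pairing $\langle u,v\rangle=\sum_j u_jv_j$ is not correct, and this is where the argument breaks. Concretely, bivariate composition sends the monomial $x^iy^k$ to the vector of $x^ia^k\rem f$, so its transpose maps $b$ to the collection $\bigl(\sum_j [x^ia^k\rem f]_j\,b_j\bigr)_{i,k}=\bigl(e_i^{\mathsf T}M_{a^k,f}^{\mathsf T}\,b\bigr)_{i,k}$, whereas the truncated-powers map sends $b$ to $\bigl([ba^k\rem f]_i\bigr)_{i,k}=\bigl(e_i^{\mathsf T}M_{a^k,f}\,b\bigr)_{i,k}$. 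Since $M_{a^k,f}$ is not symmetric, these differ by more than a re-indexing. The precise relationship is the content of \cref{lem:KTL}: one has $L_{m,d}\,P_n=\operatorname{diag}(Q_m,\ldots,Q_m)\,K_{m,d}^{\mathsf T}$ for explicit structured matrices $P_n,Q_m$ built from $f$, and these are invertible only when $f(0)\neq 0$. Your sketch of the transposed algorithm omits the pre- and post-multiplications by $P_n^{-1}$ and $\operatorname{diag}(Q_m)$; without them the output is simply not the quantity in the proposition. Even with them, the resulting argument would only cover the case $f(0)\neq 0$, while \cref{prop:truncatedpowers} makes no such assumption.

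For contrast, the paper's proof is a direct algorithmic construction that avoids transposition entirely. It works with the generating series $D(y)=b/(1-ay)$ in $(\xring/\langle f\rangle)[[y]]$, introduces the $y$-reversals $\alpha_j,\beta_j$ of $A_j,B_j$, and uses the key identity (\cref{lem:middlepartbis}) that multiplying $D$ by $\alpha_j$ (resp.\ $\beta_j$) shifts its coefficient sequence by $j\mm$ (resp.\ $j\mm^2$). The algorithm then computes $D\bmod y^{3\mm-2}$ exactly, lifts to the high $x$-part of $D\bmod y^{\mm^2+\mm-1}$ via \cref{lem:divrem_lohi}, and finally recovers the low $x$-parts of $D\bmod y^{\mm^3}$ using the simultaneous truncated products of \cref{lem:simultruncprod}. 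This route needs no hypothesis on $f(0)$ and yields the stated bound directly.
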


In this section, we use the bracket notation $[\cdot]_0^{\mmu-1}$ with
bivariate polynomials: it stands for the first coefficients $\mmu$
with respect to $x$. Also, while the proposition
computes truncations of terms $b a^k\rem f$, we will apply this with
$a$ replaced by its inverse modulo $f$.

\subsection{Simultaneous truncated products}

The first ingredient is the simultaneous computation of several truncated
products with a specific structure.

\begin{lemma}\label{lem:simultruncprod}%
  Let \(n, \mm, \hat{\mm}, \dd, \mmu \in \NN_{>0}\) with \(\hat{\mm} \in \bigO{\mm^2}\)
  and \(\dd \le n\). Given $f$ in $\xring$ of degree $n$, \(h\) in
  \(\xyring_{<(\dd,\hat{\mm})}\), and $\eta_1,\ldots,\eta_\mm$ in $\xyring_{<(\dd,\mm)}$,
  one can compute all truncated products
  \[
    R_j=[x^{n-\dd} h \eta_j \rem f]_0^{\mmu-1} \in \xyring_{<(\mmu,\mm+\hat{\mm}-1)},
    \quad 1 \le j \le \mm
  \]
  using $\softO{\mm^\expmm (\dd + \mmu \mm)}$ operations in $\field$.
\end{lemma}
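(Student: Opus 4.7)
My plan is to reduce the computation, via reversals in~$x$, to a bivariate middle product in the $x$-direction, batched over the $\mm$ values of~$j$ by a polynomial matrix multiplication.

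The first step is a reversal argument in~$x$. Writing $x^{n-\dd}\,h\,\eta_j = Q_j f + R_j^\star$ (Euclidean division in~$x$, treating $y$ formally), so that $R_j^\star$ has $x$-degree $<n$ and $Q_j$ has $x$-degree $<\dd - 1$, we get $R_j = -[Q_j f]_0^{\mmu - 1}$ whenever $n - \dd \ge \mmu$ (the other case is handled by a direct low-degree computation). Standard reversal identities then express $[Q_j]_0^{\mmu - 1}$ as the high $\mmu$ $x$-coefficients of $\tilde h(x,y)\,\tilde \eta_j(x,y)\,\tilde f(x)^{-1} \bmod x^{\dd - 1}$, where tildes denote reversals in~$x$. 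I therefore precompute $\tilde f^{-1} \bmod x^{\dd - 1}$ and then $\tilde h' := \tilde h\,\tilde f^{-1} \bmod x^{\dd - 1}$ by $\hat{\mm}$ independent truncated univariate multiplications, at total cost $\softO{\hat{\mm}\,\dd} \subseteq \softO{\mm^2 \dd}$, which is within budget since $\expmm \ge 2$.

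What remains is the main step: for all~$j$ simultaneously, compute the middle $\mmu$ $x$-coefficients (with full $y$-degree) of $\tilde h'(x,y)\,\tilde \eta_j(x,y)$. I view the $y$-coefficients of $\tilde h'$ and of the $\tilde \eta_j$ as elements of $\xvecspace{\hat{\mm}}$ and $\xvecspace{\mm}$; the convolution in~$y$ defining $\tilde h'\tilde \eta_j$ is realized by the Toeplitz matrix $T_{\tilde h'} \in \xmatspace{(\mm + \hat{\mm} - 1)}{\mm}$ (with entries of $x$-degree $<\dd$) acting on the matrix $E \in \xmatspace{\mm}{\mm}$ whose $j$-th column stores the $y$-coefficients of $\tilde \eta_j$. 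The desired output is the middle-$\mmu$-in-$x$ of the entries of $T_{\tilde h'}\, E$.

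I would split $T_{\tilde h'}$'s row dimension into $\bigO{\mm}$ blocks of~$\mm$ rows, turning the problem into polynomial matrix middle products of two $\mm \times \mm$ matrices over $\xring$. The key point, which I expect to be the main obstacle, is that the $\bigO{\mm}$ chunks must not be handled independently---doing so yields only the weaker bound $\softO{\mm^{\expmm + 1}\dd}$. Instead one batches them via a polynomial matrix middle-product argument (exploiting transposition, so that the middle product is as cheap as one multiplication) combined with a Kronecker-style encoding that merges the chunking in~$y$ with the $x$-truncation to size~$\mmu$; this yields a single polynomial matrix product of effective polynomial degree $\dd + \mmu\mm$, whose cost is $\softO{\mm^\expmm(\dd + \mmu\mm)}$. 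Finally, for each~$j$, a truncated product of $[Q_j]_0^{\mmu - 1}$ with $[f]_0^{\mmu - 1}$ recovers $R_j$ in a further $\softO{\mm\mmu}$ operations, which is negligible compared with the dominant matmult cost.
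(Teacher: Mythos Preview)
Your reduction via reversals in~$x$ is correct and is essentially what the paper does as well (the paper reverses after segmenting~$h$ in~$y$, you reverse first; this is cosmetic). You correctly identify that the whole difficulty lies in the batched middle product: computing, for each of the $\bigO{\mm}$ row-blocks $B_i$ of $T_{\tilde h'}$, the $\mmu$ middle $x$-coefficients of every entry of $B_iE$, without paying $\softO{\mm^{\expmm+1}\dd}$.

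However, your resolution of this step is a gap. You claim a ``Kronecker-style encoding that merges the chunking in~$y$ with the $x$-truncation to size~$\mmu$'' yielding ``a single polynomial matrix product of effective polynomial degree $\dd+\mmu\mm$'', but no such product is exhibited, and I do not see one. Transposition alone does not help here: both factors have $x$-degree~$<\dd$, so extracting $\mmu$ middle coefficients of an entry of $B_iE$ is not cheaper than a full degree-$\dd$ product. Packing the~$B_i$ along a new variable and Kronecker-substituting back into~$x$ simply reproduces the $\softO{\mm^{\expmm+1}\dd}$ bound you are trying to avoid.

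The missing concrete idea is a second segmentation, this time in~$x$, into slices of size~$\mmu$. After the reversal and division by~$\tilde f$, write each $y$-slice $h_{(i)}$ of $\tilde h'$ and each $\tilde\eta_j$ as $\sum_k (\cdot)_k\,x^{k\mmu}$ with $(\cdot)_k\in\otherring[x]_{<\mmu}$, where $\otherring=\field[y]/\langle y^{2\mm-1}\rangle$. The middle~$\mmu$ $x$-coefficients you want are then a fixed anti-diagonal sum $\sum_k h_{(i),k}\,\tilde\eta_{j,\ell-k}$, which for all~$(i,j)$ simultaneously is a product of an $\bigO{\mm}\times\bigO{\lceil\dd/\mmu\rceil}$ matrix by an $\bigO{\lceil\dd/\mmu\rceil}\times\mm$ matrix with entries in $\otherring[x]_{<\mmu}$. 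This is precisely the paper's route (it invokes \cite[Algo.~3.6]{NSSV24} and revisits its steps to exploit that the reversed $H_i$ have $x$-degree $<\dd$ and the reversed $\eta_j$ have large $x$-valuation), and it is what gives the bound $\softO{\mm^\expmm(\dd+\mmu\mm)}$ via the case split $\mmu\mm\lessgtr\dd$ on the inner dimension.

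Two minor points: your final recovery cost is $\softO{\mm^3\mmu}$ per all~$j$, not $\softO{\mm\mmu}$ (each $Q_j$ has $y$-degree up to $\mm+\hat\mm-1\in\bigO{\mm^2}$); this is still within budget. And your case split on $n-\dd\ge\mmu$ is unnecessary---the reversal argument works uniformly since $\deg_x \bar H_i<\dd$ regardless---while your ``direct'' fallback for $n-\dd<\mmu$ would cost $\softO{\mm^3\dd}$, which exceeds the target.
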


\begin{proof}
  Use segmentation with respect to $y$ to write
  \[
    h(x,y)=\sum_{0 \le i < \bar{\mm}} h_i(x,y) y^{\mm i},
  \]
  with $h_0,\ldots,h_{\bar{\mm}-1} \in \xyring_{<(\dd,\mm)}$ and $\bar{\mm} \in
  \bigO{\mm}$.
  Set $H_i=x^{n-\dd} h_i$. It is enough to compute, for all \(0 \le i < \bar{\mm}\)
  and \(1 \le j \le \mm\),
  \[
    R_{ij} = [H_i \eta_j \rem f]_0^{\mmu-1} \in \xyring_{<(\mmu,2\mm-1)}.
  \]
  Indeed, all $R_j = \sum_i R_{ij} y^{\mm i}$ can then be deduced in linear time
  with respect to the total size of the $R_{ij}$'s, which is in \(\bigO{\mmu
  \mm^3}\). Since all products $R_{ij}$ have $y$-degree less than $2\mm-1$, we
   compute them as polynomials in $\otherring[x]$, with $\otherring =
  \field[y]/\genBy{y^{2\mm-1}}$.

  For the computation, we use Algo.~3.6 from~\cite{NSSV24}. While the
  original presentation assumes that we work over a field, rather than
  a ring such as \(\otherring\), it is readily checked that all
  operations are performed in the ring.  A more serious issue is that
  a direct application of Lemma~3.5 in \cite{NSSV24} does not give the
  runtime we expect, as it does not take into account that $H_i \rem
  x^{n-\dd}=0$ and $\deg_x(\eta_j)<\dd$.  Thus, we revisit the steps of
  that algorithm, and analyze how these two properties improve its
  runtime.

  \begin{description}[leftmargin=0cm,labelindent=0.2cm,font=\mdseries\itshape]
    \item[Steps 2 and 3] \!%
      compute $\bar{H}_i=x^{n-1}H_i(1/x,y)$, which
      has $x$-degree less than $\dd$, and
      $\bar{\eta}_j=x^{n-1}\eta_j(1/x,y)$, which vanishes modulo
      $x^{n-\dd}$. This does not cost any arithmetic operation.

    \item[Step 3] \!%
      also computes the power series expansions
      $\gamma_j=\bar{\eta}_j/\bar{f}$ modulo $x^{n-1}$, where $\bar{f} = x^n
      f(1/x)$. Since the \(x\)-valuation of $\bar{\eta}_j$ is \(\ge n-\dd\), each
      such expansion takes $\softO \dd$ operations in $\otherring$, for a total
      of $\softO{\mm^2 \dd}$ operations in $\field$.

    \item[Step 4] \!%
      defines matrices $P_1$ and $P_2$: the $i$th row of
      \(P_1\) is obtained by segmenting the coefficients of
      $\bar{H}_i$ with respect to $x$, into slices of $x$-degree less
      than $\mmu$; $P_2$ is obtained from $P_1$ by discarding its last
      column. In the original presentation, these matrices have up to
      $\bigO{\lceil n/\mmu \rceil}$ nonzero columns, but here the
      rightmost ones are zero because $\bar{H}_i$ has $x$-degree $<
      \dd$. Discarding these columns in $P_1$ and $P_2$ leaves us with
      matrices $\bar P_1,\bar P_2$ for both, with $\bar{\mm}$ rows and
      $\bigO{\lceil \dd/\mmu \rceil}$ columns, and entries in
      $\otherring[x]_{<\mmu}$.

    \item[Step 4] \!%
      also defines matrices $Q_1$ and $Q_2$, obtained by
      segmenting the polynomials $\gamma_j$, as we did for
      $\bar{H}_i$. For valuation reasons, many bottom rows of these
      matrices are zero and can be discarded, leaving us with matrices
      $\bar Q_1, \bar Q_2$ with $\bigO{\lceil \dd/\mmu \rceil}$ rows
      and $\mm$ columns, and the same type of entries as those of
      $\bar P_1,\bar P_2$.

    \item[Step 5] \!%
      performs the matrix products $\bar P_1 \bar Q_1$ and
      $\bar P_2 \bar Q_2$, as well as $\mm \bar{\mm}$ products of
      polynomials in $\otherring[x]_{<\mmu}$; the latter take a total
      of $\softO{\mmu \mm^{3}}$ operations in $\field$. For computing
      a product \(\bar P_i \bar Q_i\), considering the two cases \(\mmu
      \mm \in \bigO{\dd}\) and \(\dd \in \bigO{\mmu \mm}\) leads to the
      complexity bounds \(\softO{\mm^\expmm \dd}\) and \(\softO{\mmu
        \mm^{\expmm+1}}\), respectively.

    \item[Step 6] \!%
      is unchanged, and amounts to $\bigO{\mm^2}$ multiplications in
      $\otherring[x]/\genBy{x^\mmu}$, that is, another $\softO{\mmu
      \mm^{3}}$ operations in $\field$.
  \end{description}

  The four terms arising in the cost are thus \(\mmu \mm^3\), \(\mm^2 \dd\), and
  either \(\mm^\expmm \dd\) or \(\mmu \mm^{\expmm+1}\). They are all in
  \(\bigO{\mm^\expmm (\dd + \mmu \mm)}\).
\end{proof}

\subsection{High part of a Euclidean remainder}

The following lemma is extracted from the proof of \cite[Lem.\,3.5]{NSSV24}.
$\otherring$ is a commutative ring.

\begin{lemma}
  \label{lem:divrem_lohi}%
  Let $f$ be monic of degree $n$ in $\otherring[x]$, let $P$ be in
  $\otherring[x]_{<n}$, and let $Q$ in $\otherring[x]_{<d}$ for some $d \in
  \{1,\ldots,n\}$. Given $Q$ and the high part $[P]_{n-t-d+1}^{t+d-2}$ for some
  $t\in \{1,\ldots,n-d+1\}$, one can compute the high part $[PQ \rem
  f]_{n-t}^{t-1}$ in $\softO{d+t}$ operations in \(\otherring\).
\end{lemma}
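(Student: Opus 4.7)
The plan is to write $PQ \rem f = PQ - sf$, where $s = \lfloor PQ/f \rfloor$ has degree at most $d-2$ (since $\deg(PQ) \le n+d-2$), and then extract the high parts of $PQ$ and $sf$ separately from the given inputs. Since $s$ is a small object, the whole computation reduces to a handful of short products and one truncated power series inversion; the challenge is to check that the input high part of $P$ is just large enough to supply all intermediate convolutions.

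First I would recover $s$ via the standard reversal trick: $\rev(s) \equiv \rev(PQ) \cdot \rev(f)^{-1} \bmod x^{d-1}$, where the reversals are taken at appropriate degrees. Only the top $d-1$ coefficients of $PQ$ are needed, namely $[PQ]_n^{d-2}$, and a quick inspection of the convolution formula shows that each of them depends on $P$ only through $P_{n-d+1}, \ldots, P_{n-1}$, all of which lie inside the given high part $[P]_{n-t-d+1}^{t+d-2}$ (the bound $t \ge 1$ forces $n-t-d+1 \le n-d+1$). Thus $s$ can be obtained from one product of polynomials of degree less than $d$, one truncated inversion of $\rev(f)$ modulo $x^{d-1}$, and another short product, all in $\softO{d}$ operations in~$\otherring$.

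Next I would compute the wanted slices $[PQ]_{n-t}^{t-1}$ and $[sf]_{n-t}^{t-1}$ by middle-product-style operations. For the first, each coefficient $[PQ]_k$ with $n-t \le k \le n-1$ depends on $P_{k-d+1}, \ldots, P_k$ and on $Q$, so the required $P$-indices lie exactly in the supplied window $\{n-t-d+1, \ldots, n-1\}$. This amounts to extracting $t$ middle coefficients of the product of two polynomials of degrees at most $t+d-2$ and $d-1$, achievable in $\softO{d+t}$ operations in $\otherring$ using a middle-product algorithm \cite{GaGe99}. The analogous computation for $[sf]_{n-t}^{t-1}$ uses $s$ (of degree $\le d-2$) together with the slice $f_{n-t-d+2}, \ldots, f_{n-1}$ of the fully known $f$, again at cost $\softO{d+t}$. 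A final subtraction yields $[PQ \rem f]_{n-t}^{t-1}$ in $\softO{t}$ additional operations.

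The main obstacle, such as it is, will be the index bookkeeping: at each step one must verify that the $P$-coefficients needed fit inside the input window $[n-t-d+1, n-1]$, that enough of $f$ is available to form $[sf]_{n-t}^{t-1}$, and that the constraints $1 \le d$ and $1 \le t \le n-d+1$ from the statement provide exactly the margin required. Beyond this accounting, the procedure is entirely standard truncated arithmetic in~$\otherring[x]$, and no operation ever manipulates an object of size larger than $d+t$, which is why the final count stays in~$\softO{d+t}$.
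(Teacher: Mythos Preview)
Your proposal is correct and follows essentially the same route as the paper's proof: recover the quotient (the paper calls it $h$, you call it $s$) via the reversal trick and a truncated power series division in $\softO{d}$, then obtain the desired high part of $r=PQ-sf$ by slicing the products $PQ$ and $sf$ using exactly the window $[P]_{n-t-d+1}^{t+d-2}$ and the analogous slice of $f$. The paper packages your two middle-product steps into the single formula $[r]_{n-t}^{t-1}=\bigl[[P]_{n-t-d+1}^{t+d-2}\,Q-h\,[f]_{n-t-d+1}^{t+d-2}\bigr]_{d-1}^{t-1}$, but the computation and the index bookkeeping are identical.
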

\begin{proof}
  Write the Euclidean division $PQ = h f +r$, thus with $r = PQ \rem f$, and
  consider the reversed polynomials given by
  \begin{align*}
    & \bar{P} = x^{n-1} P(1/x), \quad
    \bar{Q} = x^{d-1} Q(1/x), \\
    & \bar{f} =x^n f(1/x), \quad
    \bar{r} = x^{n-1} r(1/x), \quad
    \bar{h} = x^{d-2} h(1/x),
  \end{align*}
  all being in \(\otherring[x]\). By  construction, we have $\bar{P} \bar{Q} =
  \bar{h} \bar{f} + x^{d-1} \bar{r}$. Our input gives us access to $\bar{P}
  \bmod x^{d-1}$, so that \(\bar{h}\) can be obtained by power series division
  $(\bar{P}\bar{Q}/\bar{f}) \rem x^{d-1}$ in $\softO{d}$ operations. Once $h$
  is known, the high part of $r$ is obtained in $\softO{t+d}$ operations by
  \[
    [r]_{n-t}^{t-1}=\left[[P]_{n-t-d+1}^{t+d-2}Q-h[f]_{n-t-d+1}^{t+d-2}\right]_{d-1}^{t-1}.
    \qedhere
  \]
\end{proof}

\subsection{New TruncatedPowers algorithm}

We turn to the proof of \cref{prop:truncatedpowers}. For $k \ge 0$, we write
$c_k = b a^k \bmod f \in \ring$, where $\ring=\xring/\genBy{f}$. These are the
coefficients of the generating function
\[
  D(y) = \frac{b}{1-ay} \in \ring[[y]].
\]
Our goal can then be restated as computing $[c_k]_0^{\mmu-1}$ for
$k<\ddelta$, that is, $[D \bmod y^\ddelta]_0^{\mmu-1}$; this has size
$\Theta(\mmu \ddelta)$. Computing the first \(\ddelta\) coefficients
of $D$ before truncating in $x$ would yield an intermediate object of
size $\Theta(n\ddelta)$, which is too large for our target complexity.

We fix \(\mm = \lceil \ddelta^{1/3} \rceil\). In particular \(\ddelta
\le \mm^3\), and we focus on computing $[c_k]_0^{\mmu-1}$ for all \(0
\le k < \mm^3\). The key ingredient lies in the following lemma, which
shows how to obtain new coefficients $c_k$, of high indices, from
previously known ones. It makes use of the polynomials $(A_j)_{1\le j
  < \mm}$ and $(B_j)_{1\le j < \mm}$ from \cref{cor:AjBj}, and of the
\(y\)-reversed counterparts of these polynomials, also in
\(\xyring_{<(\dd,\mm)}\):
\[
  \alpha_j = y^{\mm-1}A_{j}(x,1/y)
  \quad \text{and} \quad
  \beta_j = y^{\mm-1}B_{j}(x,1/y)
\]
for \(1 \le j < \mm\). We often see \(\alpha_j\) and \(\beta_j\) as being in
\(\ring[y]_{<\mm}\), as for instance in the following lemma.

\begin{lemma}
  \label{lem:middlepartbis}%
  For $1 \le j < \mm$, and for any \(k \ge \mm-1\), the coefficient of \(y^k\) in
  \(D \alpha_j\) is  \(c_{j\mm + k - (\mm-1)}\), while that of \(D \beta_j\) is  \(c_{j\mm^2
  + k - (\mm-1)}\).
\end{lemma}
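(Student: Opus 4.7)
The plan is a short coefficient computation in $\ring[[y]]$ with $\ring = \xring/\genBy{f}$, where $c_k = b a^k \bmod f$. First I would expand $A_j(x,y) = \sum_{i=0}^{\mm-1} A_{j,i}(x)\, y^i$ with $A_{j,i} \in \xring_{<\dd}$, so that the reversal $\alpha_j = y^{\mm-1} A_j(x, 1/y)$ becomes $\alpha_j = \sum_{i=0}^{\mm-1} A_{j,i}\, y^{\mm-1-i}$, viewed as an element of $\ring[y]_{<\mm}$.

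Next I would multiply by the series $D(y) = \sum_{k \ge 0} c_k\, y^k$ and read off the $y^k$ coefficient. For $k \ge \mm-1$, every index $k - (\mm-1) + i$ with $0 \le i < \mm$ is nonnegative, so no boundary correction is needed and the convolution gives the clean expression $\sum_{i=0}^{\mm-1} A_{j,i} \cdot c_{k-(\mm-1)+i}$. Substituting $c_{\ell} = b a^{\ell} \bmod f$ and factoring $b a^{k-(\mm-1)}$ out of the sum leaves the inner expression $\sum_{i=0}^{\mm-1} A_{j,i}(x)\, a^i = A_j(x,a)$, which by \cref{cor:AjBj} equals $a^{j\mm} \bmod f$. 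Combining, the coefficient is $b a^{k-(\mm-1)} \cdot a^{j\mm} \bmod f = c_{j\mm + k - (\mm-1)}$, as claimed. The argument for $D\beta_j$ is verbatim the same, with $B_j$ and the identity $B_j(x,a) = a^{j\mm^2} \bmod f$ replacing their $A_j$-counterparts, yielding $c_{j\mm^2 + k - (\mm-1)}$.

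There is no real obstacle in this proof; it is essentially indexing bookkeeping. The single point that requires care is the hypothesis $k \ge \mm-1$, which does double duty: it guarantees that the convolution sum contains all $\mm$ terms with nonnegative indices, and it is exactly what allows the common factor $a^{k-(\mm-1)}$ to be pulled out with a nonnegative exponent so that the inner sum assembles into $A_j(x,a)$.
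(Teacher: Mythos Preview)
Your proof is correct and essentially identical to the paper's: both expand \(\alpha_j\) in terms of the \(y\)-coefficients of \(A_j\), read off the convolution coefficient of \(y^k\) in \(D\alpha_j\) for \(k \ge \mm-1\), factor out \(b a^{k-(\mm-1)}\), recognize the remaining sum as \(A_j(x,a) = a^{j\mm} \bmod f\), and then note the \(\beta_j\) case is analogous.
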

\begin{proof}
  We prove the claim for $\alpha_j$; the proof is similar for $\beta_j$. Let
  $a_{j,k}$ be the coefficient of \(y^k\) in $A_j$, so that
  $\alpha_j = a_{j,\mm-1} + \cdots + a_{j,0} y^{\mm-1}$. Then, a direct expansion
  shows that for any $k \ge \mm-1$, the coefficient of $y^k$ in the product $D
  \alpha_j$, working modulo \(f\), is
  \begin{align*}
    & b a^{k-(\mm-1)} a_{j,0} + \cdots + b a^{k-1} a_{j,\mm-2} + b a^{k} a_{j,\mm-1}  \\
    & = b a^{k-(\mm-1)} A_j(x,a)
    = b a^{j\mm+k-(\mm-1)}
    = c_{j\mm+k-(\mm-1)}.
    \qedhere
  \end{align*}
\end{proof}

Thus, if we know $D \bmod y^N$, multiplying it by $\alpha_j$ gives us all
coefficients $c_{j\mm},c_{j\mm+1},\ldots,c_{j\mm+N-1-(\mm-1)}$. The same holds for $\beta_j$,
but with a shift of $j\mm^2$ instead of \(j\mm\). Consider the truncations
\begin{align*}
  D_0 = D \bmod y^{\mm + 2(\mm-1)} = D \bmod y^{3\mm-2}, \\
  D_1 = D \bmod y^{\mm^2 + \mm-1},\quad D_2 = D \bmod y^{\mm^3}.
\end{align*}
We see \(D_0\) and \(D_1\) as being in \(\otherring_0[x]\) and
\(\otherring_1[x]\) respectively, where
\[
  \otherring_0 = \field[y]/\genBy{y^{3\mm-2}}
  \quad\text{and}\quad
  \otherring_1 = \field[y]/\genBy{y^{\mm^2+\mm-1}}.
\]
Recall that our goal is to compute \([D_2]_0^{\mmu-1}\).

According to the above remarks, multiplying $D_0$ by
$\alpha_1,\ldots,\alpha_{\mm-1}$ in $\otherring_0[x]$ gives us all
coefficients of $D_1$ (some of them are computed twice), and
multiplying $D_1$ by $\beta_1,\ldots,\beta_{\mm-1}$ in
$\otherring_1[x]$ gives us all coefficients of $D_2$ (with no repeated
calculations this time).

Still, computing \(D_1\) and \(D_2\) in this manner is too costly for
our target. Instead, our algorithm below exploits the fact
that we only seek low degree coefficients of \(D_2\).
It starts from
$D_0$ and computes the high part $[D_1]_{n-\dd}^{\dd-1}$, then deduces the
low part $[D_1]_{0}^{\mmu-1}$, and eventually $[D_2]_0^{\mmu-1}$.

\noindentparagraph{Step 1: all of $D_0$.} Each product $c_k = ba^k\bmod
f$ for \(0 \le k < 3\mm-2\) is obtained in $\softO{n}$ operations in
$\field$,
hence a total
in $\softO{\mm n}$ operations in \(\field\) for \(D_0\).

\noindentparagraph{Step 2: high part $[D_1]_{n-\dd}^{\dd-1}$.} For
this computation, it is enough to compute
all high parts $[D_0 \alpha_j]_{n-\dd}^{\dd-1}$ in $\otherring_0[x]$.  For
each of these products, we use \cref{lem:divrem_lohi} with $t=\dd$ and
input $[D_0]_{n-2\dd+1}^{2\dd-2}$. Each product uses $\softO{\dd}$
operations in $\otherring_0$, that is, in $y$-degree $\bigO{\mm}$, for a
total of $\softO{\mm^2\dd}$ operations in~$\field$.

\noindentparagraph{Step 3: low parts of $D_1$ and $D_2$.} This step computes
$[D_1]_0^{\mmu-1}$ and $[D_2]_0^{\mmu-1}$, which boils down to computing the low
parts of all products $D_0 \alpha_j$ for the former, and $D_1 \beta_j$ for the
latter. Splitting $D_0$ gives
\begin{align}
  [D_0 \alpha_j]_0^{\mmu-1} = & ~[D_0]_0^{\mmu-1}[ \alpha_j]_0^{\mmu-1}\bmod x^\mmu  \nonumber\\
                             & ~+ \left[x^{n-\dd}[D_0]_{n-\dd}^{\dd-1}\alpha_j\rem f\right]_0^{\mmu-1}\label{eq:loD1}
\end{align}
in \(\otherring_0[x]\), and similarly
\begin{align}
  [D_1 \beta_j]_0^{\mmu-1} = & ~[D_1]_0^{\mmu-1}[ \beta_j]_0^{\mmu-1}\bmod x^\mmu  \nonumber \\
                            & ~+ \left[x^{n-\dd}[D_1]_{n-\dd}^{\dd-1}\beta_j\rem f\right]_0^{\mmu-1},\label{eq:loD2}
\end{align}
in $\otherring_1[x]$. For the products in \cref{eq:loD1}, we need the low and
high parts of $D_0$, which are known. For the ones in \cref{eq:loD2}, we need
the high part of $D_1$, which we obtained in Step 2, and its low part, which we
derive from using \cref{eq:loD1} with $j = 1, \ldots, \mm-1$.

The products in \cref{eq:loD1} have lower cost than those in
\cref{eq:loD2} since the degree of \(D_0\) is smaller than that of
\(D_1\). Thus we focus on \cref{eq:loD2}, for $1 \le j < \mm$. The
first term can be computed directly using $\mm$ bivariate
multiplications modulo $(x^\mmu,y^{\mm^2+\mm-1})$, for a total of
$\softO{\mmu \mm^3}$ operations. For the second term, we use
\cref{lem:simultruncprod} with $\hat \mm=\mm^2+\mm-1$, and with a
runtime of $\softO{\mm^{\expmm} (\dd + \mmu \mm)}$.

Altogether, the cost is \(\softO{\mm n + \mm^2 \dd + \mmu \mm^3 +
\mm^\expmm (\dd + \mmu \mm)}\). Since \(\dd \ge n/\mm\) (see \cref{prop:Kx_basis_and_quorem}), this is bounded by
\(\softO{\mm^\expmm (\dd + \mmu \mm)}\).

%%%%%%%%%%%%%%%%%%%%%%%%%%%%%%%%%%%%%%%%%%%%%%%%%%%%%%%%%%%%%%%%%%%%%%%%%%%%%%%%%%%
%
%              PROOF OF MAIN THEOREMS
%
%%%%%%%%%%%%%%%%%%%%%%%%%%%%%%%%%%%%%%%%%%%%%%%%%%%%%%%%%%%%%%%%%%%%%%%%%%%%%%%%%%%

\section{Composition algorithm. Proof of 
Theorem \ref{thm:modular_composition}}
\label{sec:composition}

To prove our main result, we start with $f$ of degree $n$ in
$\field[x]$, and we first establish the claim under the assumption
$f(0)\neq0$. Our first condition on $a$ is that it satisfies
$\gcd(a,f)=1$ (this is a Zariski-generic property in the coefficients of~$a$).

We choose $m=\lceil n^{1/4}\rceil$ and $d=\lceil n/m\rceil$,
which implies both inequalities
$d^{1/3} \le m \le n$. We suppose that $a$ satisfies the genericity
conditions of \cref{lem:generic_bases_Nm} and \cref{prop:genericbasisMm} 
 for $\mm=m$, so in particular the minimal bases of
$\nodule_m$ and $\module_m$ have degree $d$.
Then, consider the following detailed presentation of the algorithm
sketched in \cref{sec:intro}:
\begin{enumerate}[leftmargin=0.2cm,itemindent=0.7cm]
\item Compute~$\tilde{a} = a^{-1}\bmod f$, which exists since~$\gcd
  (a,f)=1$;
\item\label{new-algo-comp:basis} Compute a minimal basis of $\nodule_m$, with degree $d$;
\item\label{new-algo-comp:baby-steps} Reduce $a^{jm^i}$ modulo $f$, for
  $i=0,1,2$ and $0 \le j < m$;
\item\label{new-algo-comp:reduce} Simultaneously reduce the polynomials of
  (\ref{new-algo-comp:baby-steps}) by the basis of (\ref{new-algo-comp:basis}) to obtain
  $A_j,B_j$ in
  $\xyring_{<(d,m)}$, for \(0\le j<m\), such that
  \[
  A_{j}(x,a)=a^{jm}\bmod f \;\;\;\text{and}\;\;\;  B_{j}(x,a)=a^{jm^2}\bmod f;
  \]
  \item\label{new-algo-comp:get-trunc} Use the polynomials $A_j$ and $B_j$ to compute the truncations
    $[x^i a^{-k-1}\rem f]_0^{m-1}$ for $0 \le i < m$ and $0 \le k <
    2d$;
  \item\label{new-algo-comp:getG} From these, compute a basis
    of~$\module_m$ of degree at most \(d\);
  \item\label{new-algo-comp:reduce2} Reduce~$g$ by this basis to obtain $G\in\xyring_{<(m,d)}$ such
    that $G(x,a)=g(a)\bmod f$;
\item\label{new-algo-comp:biv} Use $G$ and the $A_j$'s and $B_j$'s to compute $ g(a) \bmod f$.
\end{enumerate}

The first step takes~$\softO{n}$ operations in~$\field$. Computing the
basis of~$\nodule_m$ at \cref{new-algo-comp:basis} and deducing the
polynomials $A_j$ and $B_j$ at
\cref{new-algo-comp:baby-steps,new-algo-comp:reduce} altogether take
$\softO{m^{\expmm-1}n}$ operations, by
\cref{prop:Kx_basis_and_quorem,cor:AjBj}. With our choice of $m$, this is 
$\softO{n^{(\expmm+3)/4}}$.

To perform \cref{new-algo-comp:get-trunc}, we apply the algorithm of
\cref{prop:truncatedpowers} twice, with parameters $n$, $m'=2m-2$
(instead of $m$), $d$ and $\mu=m$ (so that $\delta=d$), and input
polynomials $f$, $\tilde a$, and $b=x^{m-1}\bmod f$,
resp. $b=x^{m-1}\tilde a^d \bmod f$. As in the proof of
\cref{theo:bivmodcomp}, the runtime is $\softO{m d^{(\expmm+2)/3}}$,
which is also $\softO{n^{(\expmm+3)/4}}$ for our choice of $m$.

Concatenating the results gives $[x^{m-1} a^{-k}\rem f]_0^{2m-2}$ for
$0\le k< 2d-1$. From this, Algo.\,3.8 and Prop.3.7 in \cite{NSSV24}
show
that we can deduce the truncations $[x^i a^{-k-1}\rem f]_0^{m-1}$,
for $0 \le i < m$ and $0 \le k < 2d$, in time $\softO{m^2 d}=\softO{n^{5/4}}$.

For \cref{new-algo-comp:getG}, using the fact that $f(0)\neq0$,
\cref{prop:genericbasisMm} allows us to compute a minimal basis of
\(\module_m\) in time $\softO{m^{\expmm-1} n}=
\softO{n^{(\expmm+3)/4}}$. \cref{lem:divrem_y} then allows us to 
perform \cref{new-algo-comp:reduce2} in the same asymptotic runtime.
Finally, \cref{new-algo-comp:biv} is handled by
\cref{prop:bivmodcomp}, using $\softO{m d^{(\expmm+2)/3}}$ operations,
which is again $\softO{n^{(\expmm+3)/4}}$.

The Zariski-open set $\mathcal{O}_{f}$ is defined from the
polynomial~$\Phi_{f,\lceil n^ {1/4}\rceil}$ from
\cref{lem:generic_bases_Nm}, the polynomial $\witnessM_{f,\lceil n^
  {1/4}\rceil}$ from \cite[Prop.\,7.6]{NSSV24} that we mentioned in
the proof of \cref{prop:genericbasisMm}, and by the requirement
$\gcd(a,f)=1$. This establishes the theorem for $f$ such that $f(0)
\ne 0$.

\medskip

Now, let $f$ be arbitrary of degree $n$, and write it as $f = x^\alpha
f^*$, with $f^*$ of degree $n^*=n-\alpha$ and $f^*(0) \ne 0$; let also
$\bar a_0,\dots,\bar a_{n-1}$ be indeterminates (that represent the
coefficients of $a$).

Let $\Delta$ be a polynomial in $\field[\bar b_0,\dots,\bar
  b_{n^*-1}]$ that defines the complement of $\mathcal{O}_{f^*}$, and
set $\Delta^*(\bar a_0,\dots,\bar a_{n-1})=\Delta(r_0,\dots,r_
{n^*-1})$, where
$r_0,\dots,r_{n^*-1}$ are the coefficients of the remainder of $\bar
a_0 + \cdots + \bar a_{n-1}x^{n-1}$ by $f^*$. In view of the equality
$\Delta^*(\bar a_0,\dots,\bar a_{n^*-1},0,\dots,0)=\Delta(\bar
a_0,\dots,\bar a_{n^*-1})$, we deduce that $\Delta^*$ is
nonzero.

Suppose then that the coefficients of $a$ do not cancel $\Delta^*$. To
compute $h=g(a) \bmod f$, it is enough to compute $\hat a = a \bmod
x^\alpha$ and $a^* = a\bmod f^*$, then $\hat h = g(\hat a) \bmod
x^\alpha$ and $h^* = g(a^*) \bmod f^*$, and finally recover $h$ by
Chinese Remaindering. Computing $\hat a$, $a^*$ and Chinese
Remaindering, take quasi-linear time in $n$, so it remains to discuss
the cost of computing $\hat h$ and $h^*$. In both cases, we use
segmentation on $g$.
\begin{description}[leftmargin=0cm,labelindent=0.2cm,font=\mdseries\itshape]
\item[Computation of $\hat h$.] Write $g = \sum_{i \le \lceil
n/\alpha\rceil} g_i (y)
  y^{i \alpha}$, with all $g_i$ of degree less than $\alpha$.  Using
  the Kinoshita-Li algorithm~\cite{KinoshitaLi2024}, we can compute
  all $g_i(\hat a) \bmod x^\alpha$ in $\softO{\alpha
    (n/\alpha)}=\softO n$ operations. Then, after the additional
  computation of $\hat a^\alpha \bmod x^\alpha$, we can use Horner's
  scheme to recover $\hat h=g(\hat a) \bmod x^\alpha$ in softly linear
  time in $n$.
\item[Computation of $h^*$.] The same procedure applies. Since
the coefficients of
  $a$ do not cancel $\Delta^*$, the coefficients of $a^*$ do not
  cancel $\Delta$, so a single composition by $a^*$ modulo $f^*$ takes
  $\softO {{n^*}^{(\expmm+3)/4}}$ operations. The rest of the analysis
  is similar, and shows that we can obtain $h^*$ in time $\softO
  {n^{(\expmm+3)/4}}$.
\end{description}

%%%%%%%%%%%%%%%%%%%%%%%%%%%%%%%%%%%%%%%%%%%%%%%%%%%%%%%%%%%%%%%%%%%%%%%%%%%%%%%%%%%
%
%              TRANSPOSITIONS OF PROBLEMS
%
%%%%%%%%%%%%%%%%%%%%%%%%%%%%%%%%%%%%%%%%%%%%%%%%%%%%%%%%%%%%%%%%%%%%%%%%%%%%%%%%%%%

\section{Complexity equivalences}
\label{sec:transpositions}

The problem of modular composition is closely related to those of
power projections and characteristic polynomials in quotient
algebras. See, in particular, \cite{Shoup94,Shoup99} and
\cite[Sec.\,6]{Kal00-2}.  This results in equivalences between some of
the subproblems addressed here that could be used to develop
alternative algorithms.
First, \cref{prop:truncatedpowers,prop:bivmodcomp} show that two
subproblems have the same complexity bound.
In fact, there is a general relation between them:
\cref{sec:truncated_vs_bivariate} shows that, in terms of straight-line programs,
the problems of truncated powers and bivariate composition are essentially equivalent, when~$f(0)\neq 0$.
This is shown on the right side of \cref{fig:subproblems}.
Next, \cref{sec:charpoly_vs_composition}
explains that, in generic cases,  bivariate composition
could be replaced by computations of specific characteristic polynomials obtained from minimal bases.
This provides another alternative path for composition, shown on the left side of \cref{fig:subproblems}.

\begin{figure}[b]
\centering
\begin{tikzpicture}[
    box/.style={
        draw,
        rounded corners=4pt,
        minimum width=1.5cm,
        minimum height=0.7cm,
        align=center
    },
    altpath0/.style={->, shorten >=1.5pt, shorten <=1.5pt, dashed},
    altpath/.style={<->, shorten >=1.5pt, shorten <=1.5pt, dashed}
]

\node[box] (A) at (0,2)   {\footnotesize Basis of $\nodule_m$};
\node[box] (B) at (0,0.9)   {{\footnotesize Truncated powers}};
\node[box] (C) at (0,-0.2)     {\footnotesize Basis of $\module_m$};
\node[box] (D) at (0,-1.3)  {\footnotesize Bivariate composition};
\node[box] (E) at (0,-2.4)  {\footnotesize Composition};

\node[box] (F) at (2.4,-0.15)  {{\footnotesize Transposition}};
\node[box] (G) at (-2.5,-0.6) {\footnotesize Characteristic \\[-0.1cm] \footnotesize polynomial};
\node[box] (H) at (-2.5,-1.8)   {\footnotesize Inverse \\[-0.1cm] \footnotesize composition};

\draw[->, line width=0.5pt, shorten >=1.5pt, shorten <=1.5pt] (A) -- (B);
\draw[->, line width=0.5pt, shorten >=1.5pt, shorten <=1.5pt] (B) -- (C);
\draw[->, line width=0.5pt, shorten >=1.5pt, shorten <=1.5pt] (C) -- (D);
\draw[->, line width=0.5pt, shorten >=1.5pt, shorten <=1.5pt] (D) -- (E);

\draw[altpath] (B.east) .. controls +(1,0) and +(0,0.5) .. (F.north);
\draw[altpath] (F.south) .. controls +(0,-0.5) and +(1,0) .. (D.east);

\draw[altpath0] (C.west) .. controls +(-1,0) and +(0,0.5) .. (G.north);
\draw[altpath0] (G.south) -- (H.north);
\draw[altpath0] (H.south) .. controls +(0,-0.5) and +(-1,0) .. (E.west);

\end{tikzpicture}

\caption{\textmd{Subproblems involved in the new modular composition algorithm. The dotted lines indicate possible alternative paths.}
\Description{Subproblems involved in the new modular composition algorithm. The dotted lines indicate possible alternative paths.}
\label{fig:subproblems}}
\end{figure}
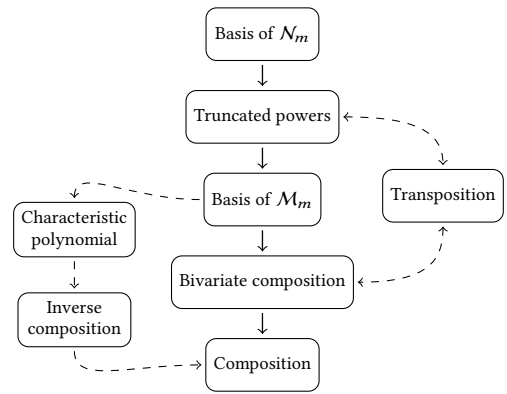

\subsection{Truncated powers vs bivariate composition}
\label{sec:truncated_vs_bivariate}
The transposition principle for linear straight-line programs~\cite
[Thm\,13.20]{BCS97} states that the product of a matrix by a vector
has essentially the same cost as the product of the transpose of this
matrix by a vector. This leads to complexity equivalences
between linear problems that are solved by straight-line programs,
like our
algorithms in \cref{sec:bivmodcomp,sec:trunc_powers}. The equivalence
in this section is proved by looking more closely at the linear maps
involved in the computations.

For two polynomials $t,u \in \field[x]$ with \(\deg(u) = l\), 
we denote by $M_{t,u} \in \matspace{l}{l}$ the matrix
of multiplication by $t$ modulo~$u$ in the basis $(1, x, \ldots, x^
{l-1})$.
The problems of bivariate modular composition and truncated powers
(with the parameters $m$ and $d$)
may be addressed using the following block Krylov matrices~\cite[Sec.\,3.4]{NSSV24}:
\[
K_{m,d} =
\begin{bmatrix}
  X & M_{a,f}X & \cdots & M_{a,f}^{d-1}X
\end{bmatrix}
  \in \field^{n \times (md)}
\]
and
\[
  L_{m,d}=\begin{pmatrix} \trsp{X}\\ \vdots\\ \trsp{X}M_{a,f}^{d-1}\\ \end{pmatrix}
  \in\field ^{(md)\times n},
\]
where $X = \trsp{[\ident{m}  ~0]} \in \field^{n\times m}$.
For a given $a$, the matrix $K_{m,d}$ represents the linear map of bivariate composition
for \(G\in \field[x,y]_{<(m,d)}\)
as in \cref{prop:bivmodcomp}, while $L_{m,d}$ represents that of truncated powers
for \(b\in \field[x]_{<n}\), as in \cref{prop:truncatedpowers}. Note
that $L_{m,d}\neq\trsp{K_{m,d}}$. Still, we now argue that truncated
powers and bivariate composition are essentially equivalent problems when $f(0)\neq 0$.

The following lemma makes use of two auxiliary matrices. The 
\emph{reversal matrix} $J_m\in \field^{m\times m}$ is the matrix of the permutation $
(m,\dots,1)$. The triangular Hankel matrix~$S\in \field^
{n\times n}$ is 
defined by $S_{i,j}=f_{i+j-1}$ if $(n-i+1) \geq  j$
and $S_{i,j} = 0$ otherwise. Note that $S$ is invertible, since $\deg
(f) = n$. It is a symmetrizer for $a$
\cite[p.\,455]{LT85}, that is, it satisfies~\cite[Lem.\,2.5]{BJMS17}
\begin{equation} \label{eq:symmetrizer}
 S \trsp{M}_{a,f} = M_{a,f} S. 
\end{equation}
\begin{lemma}\label{lem:KTL}
For $m\in \{1, \ldots, n\}$ and $d\geq 1$ we have
\begin{equation}\label{eq:blockKrylov_T}
L_{m,d} \cdot  P_n=\diag{Q_m, \ldots, Q_m} \cdot \trsp{K_{m,d}} 
\end{equation}
with 
 $Q_m=-M_{f,x^m} J_m $ and
$P_n= M_{x^m,f} S \in \field^{n\times n}$; $Q_m$ and $P_n$  depend only on~$f$ 
and are invertible if $f(0)\neq 0$.
\end{lemma}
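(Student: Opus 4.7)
My plan is to reduce the claim to a single identity that does not depend on the block index $k$ by using the symmetrizer relation \eqref{eq:symmetrizer}, and then to establish the reduced identity by a column-by-column computation; the invertibility claims will follow from standard arguments on multiplication maps.

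First, a quick induction from \eqref{eq:symmetrizer} gives $M_{a,f}^k S = S \trsp{(M_{a,f}^k)}$ for every $k \ge 0$. Moreover $M_{a,f}$ and $M_{x^m,f}$ commute, since both represent multiplications in the commutative algebra $\xring/\genBy{f}$. Hence the $k$-th $m \times n$ block of $L_{m,d} P_n$ rewrites as
\[
  \trsp{X} M_{a,f}^k M_{x^m,f} S = \trsp{X} M_{x^m,f} M_{a,f}^k S = \bigl(\trsp{X} M_{x^m,f} S\bigr) \trsp{(M_{a,f}^k)},
\]
while the $k$-th block of $\diag{Q_m, \ldots, Q_m}\trsp{K_{m,d}}$ equals $Q_m \trsp{X}\trsp{(M_{a,f}^k)}$. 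Therefore \eqref{eq:blockKrylov_T} will follow from the single identity, call it $(\star)$:
\[
  \trsp{X} M_{x^m,f} S = Q_m \trsp{X} = -M_{f,x^m} J_m \trsp{X}.
\]

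To prove $(\star)$, I would proceed column by column (indexing columns from $1$ to $n$). The $j$-th column of $S$ is the coefficient vector of $q_j(x) = f_j + f_{j+1} x + \cdots + f_n x^{n-j}$, which satisfies $x^j q_j = f - (f \bmod x^j)$. If $j \le m$, then $x^m q_j = x^{m-j}(f - (f \bmod x^j)) \equiv -x^{m-j}(f \bmod x^j) \pmod f$, and the right-hand side has degree less than $m \le n$, hence is the remainder; its first $m$ coefficients form the vector with entries $-f_0, -f_1, \ldots, -f_{j-1}$ at positions $m-j+1, m-j+2, \ldots, m$ and zeros elsewhere. If $j > m$, then $x^m q_j$ has valuation $m$ and degree $m+n-j < n$, so its first $m$ coefficients vanish. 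On the right-hand side of $(\star)$ and for the same column index: $\trsp{X}$ returns the $j$-th standard basis vector of $\field^m$ when $j \le m$ (and zero otherwise); $J_m$ reverses it into the $(m-j+1)$-th basis vector; finally $-M_{f,x^m}$ multiplies by $-(f \bmod x^m)$ modulo $x^m$, producing the polynomial $-f_0 x^{m-j} - f_1 x^{m-j+1} - \cdots - f_{j-1} x^{m-1}$, whose coefficient vector has entries $-f_0, \ldots, -f_{j-1}$ at positions $m-j+1, \ldots, m$. The two columns match; columns with $j > m$ match trivially.

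For invertibility, note that $\deg f = n$ forces $f_n \ne 0$, so $S$ is anti-triangular with nonzero anti-diagonal and hence invertible. The multiplication map $M_{x^m,f}$ on $\xring/\genBy{f}$ is invertible iff $\gcd(x^m,f)=1$, iff $f(0) \ne 0$, so $P_n = M_{x^m,f} S$ is invertible in that regime. Similarly $J_m$ is a permutation matrix, and $M_{f,x^m}$ is invertible iff the constant term of $f \bmod x^m$, namely $f(0)$, is nonzero; hence $Q_m$ is invertible under the same hypothesis. The conceptual crux lies in the symmetrizer braiding of Step 1, which collapses the block-indexed identity to a single equality; once this is noticed, the remainder is direct. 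The main source of potential errors is the bookkeeping in the column comparison, specifically tracking the support $\{m-j+1, \ldots, m\}$ of the nonzero entries and the signs.
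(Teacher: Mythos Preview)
Your proof is correct and follows essentially the same route as the paper: both reduce the block identity to the single equation $\trsp{X} M_{x^m,f} S = Q_m \trsp{X}$ via the symmetrizer relation \eqref{eq:symmetrizer} and the commutativity of multiplication matrices, and then verify this equation. The only difference is that the paper obtains this last identity from the block-diagonal decomposition $M_{x^m,f} S = \left(\begin{smallmatrix} Q_m & 0 \\ 0 & T_{n-m}\end{smallmatrix}\right)$ cited from \cite[Eq.\,(5), p.\,456]{LT85}, whereas you establish it by a direct column-by-column computation, making your argument self-contained.
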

\begin{proof}
The matrix $Q_m$ arises from the product
\cite[Eq.\,(5), p.\,456]{LT85}
\[
M_{x^m,f} S = \begin{pmatrix} Q_m& 0\\0 & T_{n-m}\end{pmatrix} \in \field^{n\times n},
\]
with $T_{n-m}
=J_{n-m} M_{\hat f,x^{n-m}}$ with $\hat f=x^n f(1/x)$.
Hence
\[
Q_m \, \trsp{X} \, S^{-1} = \trsp{X}  \, M_{x^m,f}.
\]
Using \cref{eq:symmetrizer} with~\(a^k\), for any integer \(k \ge 0\), gives
\begin{equation} \label{eq:XTH}
\trsp{X} \trsp{M}_{a^k,f} = \trsp{X} S^{-1} M_{a^k,f} S.
\end{equation}
Combining both equations gives
$
 Q_m \,\trsp{X} \trsp{{M}_{a^k,f}} = \trsp{X}  \, M_{x^m,f} {{M}_{a^k,f}} S,
$
for an arbitrary $k \geq 1$,
which is also
\[
Q_m \, (\trsp{X} \trsp{{M}_{a^k,f}}) =
 (\trsp{X}  \,{{M}_{a^k,f}})  \,P_n  \in \field^{m\times n},
\]
where $P_n= M_{x^m,f} S \in \field^{n\times n}$. This concludes the
proof of \cref{eq:blockKrylov_T}.

Up to sign, the determinants of $Q_m$ and $T_{n-m}$ are $f_0^m$ and
$f_n^{n-m}$. Thus both $Q_m$ and $P_n$ are invertible if $f(0)\neq 0$.
\end{proof}

Applying the transposition principle to the algorithm in 
\cref{sec:bivmodcomp} gives a multiplication 
by $\trsp{K_{m,d}}$ in essentially the same complexity as bivariate
modular composition. Since multiplying $\diag{Q_m, \ldots, Q_m}$,
$P_n$, and their inverses by vectors has quasilinear complexity by
reduction to polynomial multiplication
\cite[Ch.\,2]{BiPa94}, \cref{eq:blockKrylov_T} gives an algorithm of the same
complexity
for truncated powers.
The reverse direction works in the same way, by transposing the 
algorithm in \cref{sec:trunc_powers} in order to multiply $\trsp{L_{m,d}}$ by vectors. 

There exist techniques to transpose algorithms in a systematic
way~\cite{BLS03}. We have not checked whether they would produce
exactly the same algorithms as those presented in \cref{sec:bivmodcomp,sec:trunc_powers}.

\subsection{Characteristic polynomial vs composition} \label{sec:charpoly_vs_composition}

For any $f$ and a generic $a$, the determinant of a minimal basis 
of~$\module_\mm$, made monic, is the characteristic polymonial $\chi _a$
of $a$ modulo~$f$ \cite[Prop.\,10.1]{NSSV24}.
\Cref{prop:genericbasisMm} can thus be extended to the computation of
$\chi _a$, using also 
$\softO{\mm^{\expmm-1}n}$ operations \cite{LVZ17}. 
In \cref{appendix:charpoly_vs_composition}, we show that the
characteristic polynomial and the modular composition problems are
essentially equivalent in the model of computation by
straight-line programs.
In particular, a technique commonly used for parameterizing algebraic
varieties enables us to reduce the modular composition problem
and its inverse (see
[\citealp[Thm.\,3.5]{Shoup94};
\citealp[Sec.\,6.2]{NSSV24}]) to characteristic polynomial
computations. 

%%%%%%%%%%%%%%%%%%%%%%%%%%%%%%%%%%%%%%%%%%%%%%%%%%%%%%%%%%%%%%%%%%%%%%%%%%%%%%%%%%%
%
%              BIBLIO
%
%%%%%%%%%%%%%%%%%%%%%%%%%%%%%%%%%%%%%%%%%%%%%%%%%%%%%%%%%%%%%%%%%%%%%%%%%%%%%%%%%%%

%%% -*-BibTeX-*-
%%% Do NOT edit. File created by BibTeX with style
%%% ACM-Reference-Format-Journals [18-Jan-2012].

\newcommand{\Hoeven}{\relax}\newcommand{\Gathen}{\relax}

%%%%%%%%%%%%%%%%%%%%%%%%%%%%%%%%%%%%%%%%%%%%%%%%%%%%%%%%%%%%%%%%%%%%%%%%%%%%%%%%%%%
%
%              APPENDIX
%
%%%%%%%%%%%%%%%%%%%%%%%%%%%%%%%%%%%%%%%%%%%%%%%%%%%%%%%%%%%%%%%%%%%%%%%%%%%%%%%%%%%

\appendix

\section{Generic bases of \texorpdfstring{$\nodule _\mm$}{Nmu}: proof of Lemma \ref{lem:generic_bases_Nm}}
\label{appendix:genericity}

The proof is based on fairly standard results, such as the
links between Popov forms and controllability realizations \cite[Scheme\,II,
p.~427 \& Sec.\,6.7.2, p.~481]{Kailath80}.
Let $f \in \xring$ be of degree~$n$. By identifying elements of the
quotient
$a\in\xring/\genBy{f}$ with the corresponding coefficient
vector~$v_a\in\field^n$, we view $\field^n$ as a $\xring$-module through
$x\cdot v_a = v_{xa}$. We denote by $M_x \in \matspace{n}{n}$ the matrix of
multiplication by~$x$ modulo $f$ in the basis $(1,x,\ldots, x^{n-1})$, so that $x \cdot v_{a} = M_x v_a$. This
allows us to relate polynomial relations between the powers of \(a \bmod f\) to
linear dependencies between vectors in Krylov
subspaces~\cite[Sec.\,3.10]{Jac85}.

\begin{lemma}
  \label{lem:mindeg_Nr}%
  Let $V_a$ be the matrix $[v_1 ~v_a ~\cdots ~v_{a^{\mm-1}}] \in
  \field^ {n \times \mm}$. The degree of a minimal basis of $\nodule_\mm$ is the
  smallest integer $\dd$ such that
  \begin{align*}
    \operatorname{rank}([V_a ~ & M_x V_a ~\cdots ~M_x^{\dd-1} V_a]) \\
                               & = \dim (\Span(\{M_x^k v_{a^j}, k \in \mathbb N, 0 \le j < \mm\}))
                                 = n.
  \end{align*}
\end{lemma}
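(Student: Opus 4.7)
The plan is to identify the rank of the Krylov-type matrix with the codimension of a finite-dimensional subspace of $\nodule_\mm$, then count that subspace via a column-reduced basis and use the fact that $\sum_j d_j = n$ for the column degrees.

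First, I would observe that $[V_a \; M_x V_a \; \cdots \; M_x^{\dd-1} V_a]$ is the matrix, in the canonical bases, of the $\field$-linear map
\[
  \psi_\dd \colon (\xring_{<\dd})^\mm \to \field^n, \quad (p_0,\ldots,p_{\mm-1}) \mapsto \textstyle\sum_{j<\mm} p_j a^j \bmod f,
\]
since writing $p_j = \sum_k p_{j,k} x^k$ gives $\psi_\dd(p) = \sum_{j,k} p_{j,k} M_x^k v_{a^j}$. By definition of $\nodule_\mm$, the kernel of $\psi_\dd$ is exactly the subspace of $\nodule_\mm$ consisting of polynomials whose $y$-coefficients all have $x$-degree $<\dd$; rank-nullity then yields $\rank{[V_a \; \cdots \; M_x^{\dd-1}V_a]} = \mm\dd - \dim \ker \psi_\dd$. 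The images of the $\psi_\dd$'s form an increasing chain whose union is $\Span\{M_x^k v_{a^j} : k \in \NN,\, 0 \le j < \mm\}$. This span is $M_x$-stable and contains $v_1$, so through the natural $\xring$-module isomorphism $\field^n \simeq \xring/\genBy{f}$ it corresponds to an ideal containing $1$, and hence equals all of $\field^n$. This establishes the equality $\dim\Span(\cdots) = n$ in the statement.

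Next, fix a column-reduced basis $R \in \xmatspace{\mm}{\mm}$ of $\nodule_\mm$ with column degrees $d_1,\ldots,d_\mm$, and let $D = \max_j d_j$ denote its degree. Every element of $\nodule_\mm$ is uniquely of the form $Rq$ with $q \in \xring^\mm$, and the predictable-degree property of column-reduced matrices ensures that the maximum of the entry degrees of $Rq$ equals $\max_j (d_j + \deg q_j)$. Consequently, $Rq \in (\xring_{<\dd})^\mm$ iff $\deg q_j \le \dd - 1 - d_j$ for every $j$, so
\[
  \dim \ker \psi_\dd \;=\; \sum_{j:\, d_j < \dd}(\dd - d_j).
\]
As recalled at the start of \cref{sec:modules:divrem_x}, every basis of $\nodule_\mm$ has determinant equal to $\lambda f$ for some nonzero $\lambda \in \field$, so $\sum_j d_j = \deg(\det R) = n$.

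I would then conclude by checking directly that $D$ is the smallest $\dd$ at which the rank becomes $n$. For $\dd = D$, the terms with $d_j = D$ drop out, so the sum equals $\sum_j (D - d_j) = \mm D - n$, giving rank $n$. For $\dd = D-1$, letting $k \ge 1$ be the number of columns with $d_j = D$, a short computation separating these columns from the full sum $\sum_j (D-1-d_j) = \mm(D-1) - n$ yields $\dim \ker \psi_{D-1} = \mm(D-1) - n + k$ and thus rank $n - k < n$; since the rank is nondecreasing in $\dd$ (enlarging $\dd$ only adds columns to the matrix), the rank is $<n$ for every $\dd < D$. The only nontrivial ingredient is the predictable-degree property of column-reduced matrices; the rest is bookkeeping, so no substantial obstacle is expected.
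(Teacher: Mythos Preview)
Your proof is correct, but it follows a different route from the paper's. The paper argues the two inequalities separately and constructively: if a minimal basis had degree $\ell < \dd$, it builds from it a ``monic'' relation matrix $P = P_0 + \cdots + x^\ell I_\mm$ and reads off the identity $V_a P_0 + \cdots + M_x^\ell V_a = 0$, which forces the Krylov matrix at level $\ell$ to already span the whole space, contradicting the minimality of $\dd$; conversely, from the rank condition at $\dd$ it builds a nonsingular $P \in \nodule_\mm$ of degree $\dd$, bounding the minimal basis degree from above. Your argument instead computes the rank exactly at every level via rank--nullity and the predictable-degree property, obtaining $\operatorname{rank} = \mm\dd - \sum_{j:\,d_j<\dd}(\dd-d_j)$, and then reads off the threshold. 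Your approach gives more (a closed formula for the rank at each $\dd$) at the price of invoking the predictable-degree property as a black box; the paper's approach is more self-contained and ties in directly with the controllability/Krylov viewpoint from \cite{Kailath80} cited just before the lemma. Both proofs of $\dim\Span(\cdots)=n$ are essentially the same: the paper exhibits $v_1,v_x,\ldots,v_{x^{n-1}}$ explicitly, while you phrase it as ``$M_x$-stable and contains $1$''.
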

\begin{proof}
  The latter dimension is indeed \(n\), since the span contains 
  the independent vectors
  \(\{M_{x}^k v_1, 0 \le k < n\} = \{v_1,v_x,\ldots,v_{x^{n-1}}\}\).

  We first show that the degree of a minimal basis of~$\nodule_\mm$ cannot be less than $\dd$.
  By contradiction, assume there exists a minimal basis $R \in \xmatspace{\mm}{\mm}$
  of $\nodule_\mm$ whose degree is \(\ell < \dd\). Let \((\ell_1,\ldots,\ell_\mm)\) be
  the column degrees of \(B\) and \(L \in \matspace{\mm}{\mm}\) be its leading
  matrix, that is, \(L_{i,j}\) is the coefficient of degree \(\ell_j\) of
  \(R_{ij}\). Since \(R\) is column reduced, \(L\) is invertible. Then
  \(
  P = R \diag{x^{\ell-\ell_1}, \ldots, x^{\ell - \ell_\mm}} L^{-1}
  \)
  has the form
  \[
    P = P_0 + xP_1 + \cdots + x^{\ell-1} P_{\ell-1} + x^\ell I_\mm,
    \quad P_k \in \matspace{\mm}{\mm}.
  \]
  Since the columns of \(R\) are in \(\nodule_\mm\), those of \(P\) are too,
  meaning that 
  \begin{equation} \label{eq:main_appA}
  [1 \;\; a \;\; \cdots \;\; a^{m-1}] R = 
  [1 \;\; a \;\; \cdots \;\; a^{m-1}] P =0 \bmod f.
  \end{equation}
   This translates as
  $
    V_a P_0 + M_x V_a P_1 + \cdots + M_x^{\ell-1} V_a P_{\ell-1} + M_x^{\ell} V_a = 0.
  $
  From this identity, any vector in \(\{M_x^k v_{a^j}, k \ge \ell, 0 \le
  j < \mm\}\) can be expressed as a linear combination of the columns
  of the matrix \([V_a ~ M_x V_a ~\cdots ~M_x^{\ell-1} V_a]\),
  which contradicts the minimality of \(\dd\).

We follow a similar path to show that a minimal basis of \(\nodule_\mm\) has
degree at most \(\dd\). By definition of \(\dd\), the columns of $M_x^\dd V_a$ are
combinations of those of $[V_a ~ M_x V_a ~\cdots ~M_x^{\dd-1} V_a]$. Therefore
there exist matrices \(P_0,\ldots,P_{\dd-1} \in \matspace{\mm}{\mm}\) such that
$
  V_a P_0 + M_x V_a P_1 + \cdots + M_x^{\dd-1} V_a P_{\dd-1} + M_x^{\dd} V_a = 0.
$
By construction, the matrix \(P = P_0 + \cdots + x^{\dd-1} P_{\dd-1} + x^\dd I_\mm\) is
nonsingular and its columns are in \(\nodule_\mm\). It follows, by minimality,
that any minimal basis of \(\nodule_\mm\) has degree at most \(\deg(P) = \dd\).
\end{proof}

\cref{lem:generic_bases_Nm} can now be proven by considering~$a=x^{\dd}$, where \(\dd = \lceil n/\mm\rceil\). Decompose $f$ as $f=\sum_{0
\le i < \mm} f^{(i)} a^{i}$, with $\deg(f^{(i)}) < \dd$ for \(i < \mm-1\)  and
\(\deg(f^{(\mm-1)}) \le \dd\). In the identity
\begin{align*}
  & \begin{bmatrix}
    f & -a & -a^2 & \cdots & -a^{\mm-1} \\
      & 1  \\
      &    & 1 \\
      &    &      & \ddots \\
      &    &      &        & 1
    \end{bmatrix}\cdot  \begin{bmatrix}
    0 & 0 & \cdots & 0 & 1 \\
    -1  & a  &&& f^{(1)}\\
        &  -1  & \ddots && \vdots\\
        &    &    \ddots  & a& \vdots \\
        &    &      &   -1     & f^{(\mm-1)}
  \end{bmatrix}
  \\[-0.1cm]
  & \qquad\qquad = 
  \begin{bmatrix}
    a & 0 & \cdots & 0 & f^{(0)} \\
    -1  & a  &&& f^{(1)}\\
        &  -1  & \ddots && \vdots\\
        &    &    \ddots  & a& \vdots \\
        &    &      &   -1     & f^{(\mm-1)}
  \end{bmatrix} \in \xmatspace{\mm}{\mm},
\end{align*}
the first matrix in the product is a basis of~$\nodule_\mm$ and the second one is
unimodular. It follows that the right-hand side is a basis of $\nodule _\mm$ of
degree~$\dd$, and that minimal bases of $\nodule_\mm$ have degree at most~$\delta$. 
Hence, by \cref{lem:mindeg_Nr}, $[V_a ~ M_x V_a ~\cdots ~M_x^{\dd-1}
V_a]$ has maximal rank~$n$. Now consider a generic \(a\). A fortiori, the
latter matrix still has rank~$n$, and by \cref{lem:mindeg_Nr} again, any minimal
basis of $\nodule_\mm$ has degree at most $\dd$. 
Finally, since the sum of the column degrees of a minimal basis of~$\nodule_\mm$ is exactly the degree of its determinant, i.e. $\deg(f) = n$,  such a basis cannot have a degree less than $\lceil n/\mm\rceil$. 
Therefore, minimal bases of $\nodule_m$ have degree exactly~$\lceil
n/\mm\rceil$.
Based on the above, if $\bar a = \sum_{0 \le i < n} \bar a_i x^i$, 
where the $\bar a_i$’s are new
indeterminates, then  $[V_{\bar a} ~ M_x V_{\bar a} ~\cdots ~M_x^{\dd-1}
V_{\bar a}]$ has a nonzero $n$-minor $\witnessN_{f,\mm} \in \field[\bar a_0,\ldots \bar a_{n-1}]$. 
A set of polynomials $a$ for which minimal bases of $\nodule_m$ have degree exactly~$\dd=\lceil
n/\mm\rceil$ is thus given by those whose coefficients avoid the zero set of~$\witnessN_{f,\mm}$.

%%%%%%%%%%%%%%%%%%%%%%%%%%%%%%%%%%%%%%%%%%%%%%%%%%%%%%%%%%%%
%%%%%%%%%%%%%%%%%%%%%%%%%%%%%%%%%%%%%%%%%%%%%%%%%%%%%%%%%%%%
%%%%%%%%%%%%%%%%%%%%%%%%%%%%%%%%%%%%%%%%%%%%%%%%%%%%%%%%%%%%

\section{Generic bases of \texorpdfstring{$\module_\mm$}{Mmu}}
\label{appendix:genericityMm}

We show that for any $f\in \field[x]$ of degree $n$, and for generic $a$, any minimal basis 
of $\module_\mm$ defined by \cref{eq:Mm} has degree $\dd=\lceil n/\mm\rceil$.
We use techniques very similar to those in  \cref{appendix:genericity}, which we do not repeat in detail.
Let $M_a$ be the matrix of multiplication by~$a$ modulo~$f$.
We now view $\field ^n$ as a $\field[y]$-module through $y \cdot v= M_av$ for $v\in \field ^n$. 
By swapping the roles of $a$ and $x$ in \cref{lem:mindeg_Nr} we obtain the following. 
Since the minimal polynomial of $a$  may have a degree less than $n$, 
the statement now involves a dimension that may not be equal to $n$. 
However, this dimension will be $n$ in generic cases. 

\begin{lemma}
  \label{lem:mindeg_Mr}%
  Let $V_x = [v_1 ~v_x ~\cdots ~v_{x^{\mm-1}}] = \trsp{[\ident{\mm}  ~0]} \in
  \field^{n \times \mm}$. The degree of a minimal basis of $\module_\mm$ is the
  smallest integer $\dd$ such that
  \begin{align*}
    \operatorname{rank}([V_x ~ & M_a V_x ~\cdots ~M_a^{\dd-1} V_x]) \\
                               & = \dim (\Span(\{M_a^k v_{x^j}, k \in \mathbb N, 0 \le j < \mm\})) \leq n.
  \end{align*}
\end{lemma}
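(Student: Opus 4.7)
The plan is to mirror the proof of \cref{lem:mindeg_Nr} with the roles of $x$ and $a$ exchanged. Identifying $\field^n$ with $\xring/\genBy{f}$ and endowing it with the $\field[y]$-module structure $y \cdot v = M_a v$, any polynomial relation $\sum_{0 \le i < \mm} p_i(y)\, x^i = 0 \bmod f$ evaluated at $y = a$ translates to a vector identity $\sum_i p_i(M_a)\, v_{x^i} = 0$ in $\field^n$. I would denote by $\dd^\star$ the degree of a minimal basis of $\module_\mm$ and by $\dd$ the smallest integer for which the rank condition in the statement holds, and show $\dd^\star = \dd$ via two matching inequalities.

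For $\dd \le \dd^\star$, I start from a minimal basis $R \in \ymatspace{\mm}{\mm}$ of $\module_\mm$, with column degrees $(\ell_1,\ldots,\ell_\mm)$ and invertible leading matrix $L$ (by column reducedness). Normalize to
\[
P = R \diag{y^{\dd^\star-\ell_1},\ldots,y^{\dd^\star-\ell_\mm}} L^{-1} = P_0 + y P_1 + \cdots + y^{\dd^\star-1} P_{\dd^\star-1} + y^{\dd^\star} \ident{\mm},
\]
whose columns still lie in $\module_\mm$. The identity $[1,\, x,\, \ldots,\, x^{\mm-1}]\, P(a) = 0 \bmod f$ then reads
\[
V_x P_0 + M_a V_x P_1 + \cdots + M_a^{\dd^\star-1} V_x P_{\dd^\star-1} + M_a^{\dd^\star} V_x = 0,
\]
which expresses every $M_a^k v_{x^j}$ for $k \ge \dd^\star$ as a linear combination of the vectors in $\{M_a^i v_{x^j} : 0 \le i < \dd^\star,\; 0 \le j < \mm\}$. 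Hence the rank condition is satisfied at $\dd^\star$, giving $\dd \le \dd^\star$.

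Conversely, for $\dd^\star \le \dd$, the rank condition yields matrices $P_0,\ldots,P_{\dd-1} \in \matspace{\mm}{\mm}$ with
\[
V_x P_0 + M_a V_x P_1 + \cdots + M_a^{\dd-1} V_x P_{\dd-1} + M_a^\dd V_x = 0.
\]
Then $P = P_0 + y P_1 + \cdots + y^{\dd-1} P_{\dd-1} + y^\dd \ident{\mm}$ is nonsingular (its leading matrix is $\ident{\mm}$) and its columns lie in $\module_\mm$. Writing $P = R Q$ with $Q \in \ymatspace{\mm}{\mm}$ nonsingular and invoking the predictable degree property for the column reduced $R$, each column degree of $P$ is bounded below by some column degree of $R$, and nonsingularity of $Q$ forces the maximum column degree of $R$ to be attained; this gives $\dd^\star \le \deg(P) = \dd$.

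The only point where care is needed, and which explains the weakening of the equality in \cref{lem:mindeg_Nr} to an inequality in the statement, is the dimension of $\Span(\{M_a^k v_{x^j}\})$: unlike the span of $\{M_x^k v_{a^j}\}$, which always contains $\{v_1,\ldots,v_{x^{n-1}}\}$ and thus equals $\field^n$, this new span may be strictly smaller than $\field^n$ when the minimal polynomial of $a$ modulo $f$ has degree less than $n$. This does not obstruct the argument — neither direction uses the dimension to be exactly $n$ — but it records a generality relevant for the subsequent genericity analysis of $\module_\mm$, where one will need to verify that the generic $a$ restores this dimension to $n$.
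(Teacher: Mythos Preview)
Your proof is correct and follows essentially the same approach as the paper, which simply indicates that the argument is analogous to that of \cref{lem:mindeg_Nr} with the roles of $x$ and $a$ swapped and the relation $[1\;x\;\cdots\;x^{\mm-1}]P = 0 \bmod \idealGens$ translating to $V_x P_0 + M_a V_x P_1 + \cdots + M_a^{\ell} V_x P_\ell = 0$. Your treatment is in fact more explicit than the paper's in two places: you argue the first inequality directly rather than by contradiction, and you spell out the predictable-degree reasoning behind the paper's terse ``by minimality'' in the second inequality.
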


\cref{lem:mindeg_Mr} can be proved by reasoning analogous to that of the proof of \cref{lem:mindeg_Nr}. 
Relation matrices $P$ for $\module _\mm$ are now matrices in~$\field[y]^{\mm \times \mm}$.  
\Cref{eq:main_appA}
is reinterpreted in the form
$$
[1 \;\; x \;\; \cdots \;\; x^{\mm-1}] P =0 \bmod \idealGens, 
$$
which now translates as 
$
    V_x P_0 + M_a V_x P_1 + \cdots + M_a^{\ell-1} V_x P_{\ell-1} + M_a^{\ell} V_x P_{\ell}= 0,
$
  when $P$ has degree $\ell$.
The degree of minimal bases of $\module_\mm$ in generic cases follows from \cref{lem:mindeg_Mr} by considering $a$ of degree $\mm$.
For such an $a$, the matrix $[V_x ~  M_a V_x ~\cdots ~M_a^{\dd-1} V_x]$ has 
rank~$n$. 
Indeed, since $a$ has degree $\mm$, its first $n$ columns form an invertible upper triangular matrix.
It can finally be deduced that there exists a polynomial 
 $\Psi_{f,\mm} \in \field[\bar a_0,\ldots \bar a_{n-1}]$ whose avoidance of the zero set makes it possible to 
characterize polynomials $a$ for which minimal bases of~$\module_m$ have degree exactly~$\dd=\lceil
n/\mm\rceil$.

\section{Characteristic polynomial vs composition} \label{appendix:charpoly_vs_composition}

\balance

Here, we argue informally that for generic $a$ and $f$ (in the Zariski sense), the problem of computing the characteristic polynomial of $a$ modulo $f$ and modular composition have roughly the same complexity.
We restrict to algebraic algorithms that can be written as straight-line programs.
In this case, each problem can be solved by an algorithm whose cost is
a constant multiple of that of the other problem,
plus $\softO{n}$ operations.
We then explain how our approach could exploit this equivalence.
Note that, generically,~$f$ and the characteristic polynomial $\chi _a$ of $a$ are separable, hence
the minimal polynomial of $a$ modulo $f$ coincides with~$\chi _a$.

The reduction of the  (characteristic) minimal  polynomial  to 
composition is frequently used; it can be inferred from Shoup's work~\cite{Shoup94,Shoup99}. The minimal polynomial of $a$ modulo $f$ can be computed using a randomized algorithm,
which requires one modular composition (allowing $g$ to have degree $2n-1$) and the computation of the minimal polynomial of a linearly recurrent sequence in~$\field^{\mathbb N}$~\cite[Sec.\,12.3]{GaGe97}.
When $\chi _a$ is separable, a deterministic algorithm can be derived using the trace instead of random linear forms as was done in \cite{Shoup99}. (See, for example, \cite[Sec.\,4.3]{PS13}).

We now sketch the reduction in the opposite direction, via the problem of  inverse modular composition.
Given $a$ and $f$ as before, and for a polynomial $h\in \field[x]_{<n}$, inverse modular composition is the problem of computing $g \in \field[y]_{<n}$ such that $g(a)=h \bmod f$ [\citealp[Thm.\,3.5]{Shoup94};
\citealp[Sec.\,6.2]{NSSV24}]. Assuming that the minimal polynomial of
$a\bmod f$ has degree $n$, the polynomial $g$ always exists and is
unique.
To address inverse modular composition, we introduce a new variable~$z$ and consider the polynomial $a+zh$.
Using that~$\chi _a \in \field[y]$ is separable,  $h$ and the characteristic polynomial $\chi_{a,h}\in \field[y,z]$ of $a+zh$ modulo~$f$ can be related as follows (\cite[Lem.\,2.8]{ABRW96} or \cite[Sec.\,12.4, Eq. 12.4]{BPR06}, 
valid for an arbitrary field):
\begin{equation} \label{eq:uChow}
-\frac{\left((\frac{\partial}{\partial z} \chi_{a,h})_{z=0}\right)(a)}{\chi' _a(a)} = h \bmod f.
\end{equation}
If $\bar\chi _a$ is the inverse of $\chi'_a$ modulo $\chi_a$, then
\begin{equation} \label{eq:uChow2}
g= -\bar\chi _a \left(\frac{\partial}{\partial z} \chi_{a,h}\right)_{z=0} \rem \chi_a
\end{equation}
is the solution to the problem of inverse composition for $h$.

A straight-line program ${\mathcal C}$ that computes $\chi_a$
gives rise to a program ${\mathcal C}_{z}$ that computes $\chi_{a,h}$
modulo $\langle f, z^2\rangle$, giving
the derivative at $z=0$ in the left-hand side of 
\cref{eq:uChow}. The divisions that occur in~$
{\mathcal C}_ {z}$ are those already present in ${\mathcal C}$
and the overall cost is multiplied by a constant (see e.g., the proof of \cite[Lem.\,(7.2)]{BCS97}).
Note that here we have considered straight-line programs, but a situation where the tests on elements modulo $z^2$
 can be done on the coefficients of valuation $0$ could be suitable.
Once the above derivative is known, the solution~$g$ from 
\cref{eq:uChow2} can be computed after applying the extended Euclidean
algorithm to produce~$\bar \chi _a$.

Finally, modular composition can be reduced to inverse modular composition. Indeed,
if $\alpha \in \field[y]_{<n}$ and $\gamma \in \field[x]_{<n}$ satisfy
\begin{equation}\label{eq:comp1}
\alpha(a)=x \bmod f, ~\gamma(\alpha)=g \bmod \chi_a,
\end{equation}
then we obtain $\gamma = g(a) \rem f$ in two inverse modular
compositions within the announced complexity.

We remark that \cref{eq:comp1} defines the $\field$-algebra isomorphism
\begin{equation} \label{eq:defKiso}
\phi_a: \field[x]/\langle f\rangle \rightarrow \field[y]/\langle \chi _a\rangle
\end{equation}
that maps $x$ to $\alpha$ and $a$ to $y$, and more generally $u$ to $v$ such that $v(a)=u \bmod f$.
Therefore, $p(\alpha,y) = 0 \bmod \chi_a$, with $p \in \xyring$, if and only if
$p(x,a) = 0 \bmod f$. This implies, in particular,  that
the bases of~$\module_\mm$ and $\nodule_\mm$ for $\alpha$ modulo $\chi _a$ (after the
substitution of “$y$” by “$x$”, and “$x$” by “$y$” to make notation match), coincide with the bases
of~$\nodule_\mm$ and $\module_\mm$ for $a$ modulo $f$, respectively.

\medskip

We now discuss our composition approach in light of this
computational equivalence.
The determinant of a minimal basis
of~$\module_\mm$ is a multiple of the minimal polynomial of $a\bmod f$
\cite[Prop.\,4.1]{NSSV24}, 
hence gives the characteristic polynomial when the two coincide.  
Since this determinant can be computed using $\softO{\mm^{\expmm-1}n}$ operations \cite{LVZ17}, obtaining $\chi_a$ does not
incur an additional cost beyond the minimal basis cost.
According to \cref{eq:comp1}, modular composition reduces to two applications of \cref{eq:uChow2}, which
essentially corresponds to the computation of the characteristic polynomials of $a+zx$ modulo $\langle f, z^2\rangle$
and $\alpha+zg$ modulo $\langle \chi _a, z^2\rangle$.
To the extent that the computation of a minimal basis of $\module _m$ and its determinant are written in terms
of straight-line programs (the isomorphism of \cref{eq:defKiso} ensures the relevance for both $a$ and $\alpha$), we obtain
a different version of the last part of our algorithm, where 
computing the two characteristic polynomials above  and applying
\cref{eq:uChow2,eq:comp1}
yields the result of the
final bivariate modular composition of Steps
\ref{new-algo:step4}-\ref{new-algo:last-composition}.

\end{document}